\newtheorem{thm}{Theorem}[section] %[chapter]
\newtheorem{theorem}[thm]{Theorem}
\newtheorem{fact}[thm]{Fact}
\newtheorem{remark}[thm]{Remark}
\newtheorem{definition}[thm]{Definition}
\newtheorem{lemma}[thm]{Lemma}
\newtheorem{cor}[thm]{Corollary}
\numberwithin{equation}{section}
\newcommand{\RRR}{\mathbb{R}}
\newcommand{\CCC}{\mathbb{C}}
\newcommand{\calL}{\mathcal{L}}
\newcommand{\LP}{\widetilde{L}}
\newcommand{\AP}{\widetilde{A}}
\newcommand{\DP}{\widetilde{D}}
\newcommand{\xp}{\widetilde{x}}
\newcommand{\bp}{\widetilde{B}}
\newcommand{\fp}{\widetilde{F}}
\newcommand{\pb}{p}
\newcommand{\Otil}{\widetilde{O}}
\newcommand{\rot}{\intercal}
\newcommand{\nL}{\mathcal{L}}
\newcommand{\vol}{\mathsf{Vol}}
\newcommand{\hmod}{\hspace{-5pt}\mod}
\newcommand{\eps}{\varepsilon}
\newcommand{\maxlin}{\textsf{MAX-2-LIN($k$)}}
\newcommand{\weit}{b}
\newcommand{\ists}{\mathcal{I}}
\newcommand{\jsts}{\mathcal{J}}
\newcommand{\cst}{c}
\newcommand{\asn}{\phi}
\newcommand{\una}{\bot}
\newcommand{\pnt}{p}
\newcommand{\kpa}{\nu}
\newcommand{\lev}{\ell}
\newcommand{\IH}{\widehat{\ists}}
\newcommand{\GH}{\widehat{G}}
\newcommand{\calG}{\mathcal{U}}
\newcommand{\calU}{\mathcal{U}}
\def\ceil#1{\left\lceil #1 \right\rceil}
\def\prob#1#2{\mathbb{P}_{#1}\left[ #2 \right]}
\def\expec#1#2{{\mathbb{E}}_{#1}\left[ #2 \right]}
\newcommand{\E}{\mbox{{\bf E}}}
\let\conj\overline
\def\norm#1{\left\| #1 \right\|}
\def\trace#1{\mathrm{tr} \left(#1 \right)}
\newcommand{\lambdamin}{\lambda_{\mathrm{min}}}
\newcommand{\lambdamax}{\lambda_{\mathrm{max}}}
\def\defeq{\stackrel{\mathrm{def}}{=}}
\def\setof#1{\left\{#1  \right\}}
\newcommand{\kh}[1]{\left(#1\right)}
\def\sizeof#1{\left|#1  \right|}
\def\union{\cup}
\tikzset{
  @arc through/.style 2 args={
    to path={
      \pgfextra
        \pgfextract@process\pgf@tostart{\tikz@scan@one@point\pgfutil@firstofone(\tikztostart)\relax}%
        \pgfextract@process\pgf@tothrough{\tikz@scan@one@point\pgfutil@firstofone#1}%
        \pgfextract@process\pgf@totarget{\tikz@scan@one@point\pgfutil@firstofone(\tikztotarget)\relax}%
        \pgfextract@process\pgf@topointMidA{\pgfpointlineattime{.5}{\pgf@tostart}{\pgf@tothrough}}%
        \pgfextract@process\pgf@topointMidB{\pgfpointlineattime{.5}{\pgf@totarget}{\pgf@tothrough}}%
        \pgfextract@process\pgf@tocenter{%
          \pgfpointintersectionoflines{\pgf@topointMidA}
            {\pgfmathrotatepointaround{\pgf@tothrough}{\pgf@topointMidA}{90}}
            {\pgf@topointMidB}{\pgfmathrotatepointaround{\pgf@tothrough}{\pgf@topointMidB}{90}}}%
        \pgfcoordinate{arc through center}{\pgf@tocenter}%
        \pgfpointdiff{\pgf@tocenter}{\pgf@tostart}%
        \pgfmathveclen@{\pgfmath@tonumber\pgf@x}{\pgfmath@tonumber\pgf@y}%
        \edef\pgf@toradius{\pgfmathresult pt}
        \pgfmathanglebetweenpoints{\pgf@tocenter}{\pgf@tostart}%
        \let\pgf@tostartangle\pgfmathresult
        \pgfmathanglebetweenpoints{\pgf@tocenter}{\pgf@totarget}%
        \let\pgf@toendangle\pgfmathresult
        \ifdim\pgf@tostartangle pt>\pgf@toendangle pt\relax
          \pgfmathsetmacro\pgf@tostartangle{\pgf@tostartangle-360}%
        \fi
        #2%
          \pgfmathsetmacro\pgf@toendangle{\pgf@toendangle-360}%
        \fi
      \endpgfextra
      arc [radius=+\pgf@toradius, start angle=\pgf@tostartangle, end angle=\pgf@toendangle] \tikztonodes
    }},
  arc through ccw/.style={@arc through={#1}{\iffalse}},
  arc through cw/.style={@arc through={#1}{\iftrue}},
}
\newenvironment{fminipage}%
  {\begin{Sbox}\begin{minipage}}%
  {\end{minipage}\end{Sbox}\fbox{\TheSbox}}
\title{Hermitian Laplacians and a Cheeger inequality \\
 for the Max-2-Lin problem\label{sec:preliminaries}}
\author{Huan Li\footnote{School of Computer Science, Fudan University, China} \and He Sun\footnote{School of Informatics, the University of Edinburgh, UK} \and Luca Zanetti\footnote{Department of Computer Science and Technology, the University of Cambridge, UK}}
\date{}
\begin{document}

\maketitle

\begin{abstract}
%We present an efficient algorithm to find a good solution to  the \maxlin\ problem.  For any \maxlin\ instance with $n$ variable, our algorithm runs in $\widetilde{O}(k\cdot n^2)$ time and returns a solution that \textcolor{red}{xxx}. The key to our result is a Hermitian Laplacian matrix representation for the problem,  and   a Cheeger inequality that relates an assignment of a \maxlin\ instance to the smallest eigenvalue of its corresponding Hermitian Laplacian matrix.

We study spectral approaches for the \maxlin\ problem, in which we are given 
a system of $m$ linear equations of the form 
$
x_i - x_j \equiv \cst_{ij}\hmod k$,  
and required    to find an assignment to the $n$ variables $\{x_i\}$ that maximises the total number of satisfied equations. 

We consider Hermitian Laplacians related to this problem, and 
prove a Cheeger inequality  that relates the smallest eigenvalue of a Hermitian Laplacian to the maximum number of satisfied equations of a \maxlin\ instance $\ists$.
We develop  an  $\widetilde{O}(kn^2)$ time\footnote{The notation $\widetilde{O}(\cdot)$ suppresses poly-logarithmic factors in $n$, $m$, and $k$.} algorithm that, for any $(1-\eps)$-satisfiable instance, produces an assignment satisfying a
 $\left(1 - O(k)\sqrt{\eps}\right)$-fraction of equations.  We also present a subquadratic-time algorithm that, when the graph associated with $\ists$ is an expander, produces an assignment satisfying a $\left(1- O(k^2)\eps \right)$-fraction of the equations. Our Cheeger inequality and first algorithm can be seen as generalisations of the Cheeger inequality and algorithm for \textsf{MAX-CUT} developed by Trevisan.

  %spectral approaches to the \maxlin\ problem, in which we are given a set of equations of the form $x_u - x_v \equiv c$ over $\mathbb{Z}_k$ and  the objective is to   find an assignment to the variables that satisfies the maximum number of equations. We introduce an Hermitian Laplacian matrix representation for the problem and develop a Cheeger-type inequality that relates the smallest eigenvalue of this matrix to the fraction of unsatisfied equations by the best possible assignment. This leads to a $\tilde{O}(kn^2)$ algorithm that satisfies a $1 - O(k)\sqrt{\eps}$ fraction of equations whenever the best possible assignment satisfies a $1 - \eps$ fraction.

%Since \textsf{MAX-2-LIN($2$)} corresponds to the problem of finding a maximum cut in un undirected graph, our result can be seen as a generalisation of Trevisan's Cheeger inequality and algorithm for \textsf{MAX-CUT}~\cite{Trevisan09}.
\end{abstract}

\thispagestyle{empty}

\setcounter{page}{0}
\newpage

\section{Introduction}

In the \maxlin\ problem, we are given a system of $m$ linear equations of the form 
\begin{equation}\label{eq:general}
u_i - v_i \equiv \cst_i\hmod k
\end{equation}
where $u_i, v_i \in \setof{x_1,\ldots,x_n}$
and each equation has weight $\weit_i$.
The objective is to find an assignment to the variables $x_i$ that maximises the total weight of satisfied equations.  As an important case of  Unique Games~\cite{feigelovasz92,Kho02a}, the \maxlin\ problem has been extensively studied in theoretical computer science.
 This problem is known 
to be \textsf{NP}-hard to  approximate within a ratio of $11/12+\delta$ for any constant $\delta> 0$~\cite{feige04,Hastad01}, and it is conjectured to be hard to distinguish between \maxlin\ instances for which a $(1-\eps)$-fraction of equations can be satisfied versus  
  instances for which only an $\eps$-fraction can be satisfied~\cite{KhotKMO07}. 
  On the algorithmic side, there has been a number of  LP and SDP-based algorithms proposed for the \maxlin\ problem~(e.g., \cite{Kho02a,trevisan05,charikar06,gupta06}), and  
  the   case of $k=2$, which corresponds to the  classical \textsf{MAX-CUT} problem for undirected graphs~\cite{GW95, Karp72},
  has been widely studied over the past fifty years.

%  , and a large number of algorithms based on LP or SDP relaxations have been proposed (e.g., \cite{Kho02a,trevisan05,charikar06,gupta06}).
  
%  the \maxlin\ problem for general values of $k$ has several connections to the Unique Games Conjecture~\cite{Kho02a}.

    %As a basic  combinatorial optimisation problem, the \maxlin\ problem has received a lot of attention with respect to its \textsf{NP}-hardness~\cite{AnderssonEH01,Hastad01}, connections to the Unique Games Conjecture~\cite{KhotKMO07}, as well as for the design of approximation algorithms.  

In this paper we investigate  efficient spectral  algorithms  for \maxlin. For any \maxlin\ instance $\ists$ with $n$ variables, we express $\ists$ by a Hermitian Laplacian matrix $L_\ists\in\CCC^{n\times n}$, and analyse the spectral properties of $L_{\ists}$. In comparison to the well-known Laplacian matrix for undirected graphs~\cite{chung1}, complex-valued entries in $L_{\ists}$ are able to express directed edges in the graph associated with $\ists$, and at the same time ensure that  all the eigenvalues of $L_{\ists}$ are real-valued. We demonstrate the power of our Hermitian Laplacian matrices by relating the maximum number of satisfied equations of $\ists$ to the spectral properties of $L_{\ists}$. In particular, we develop a Cheeger inequality that relates partial assignments of $\ists$ to $\lambda_1(L_{\ists})$, the smallest eigenvalue of $L_{\ists}$. 
Based on a recursive application of the algorithm behind our Cheeger inequality, as well as a spectral sparsification procedure for  \maxlin\ instances,  we present an approximation algorithm for \maxlin\ that runs in $\widetilde{O}(k\cdot n^2)$  time. Our algorithm is easy to implement, and is 
  significantly faster than most SDP-based algorithms for this problem in the literature, while achieving similar guarantees for constant values of $k$.  
The formal statement of our result is as follows:

\begin{theorem}
\label{thm:main1}
There is an $\widetilde{O}(k\cdot n^2)$-time algorithm such that, for any given \maxlin\ instance $\ists$ with optimum  $1 - \eps$, the algorithm returns 
    an assignment $\asn$ satisfying at least 
a $(1 - O(k) \sqrt{\eps})$-fraction of the equations\footnote{
An instance $\ists$ has optimum $1-\eps$, if the maximum fraction of the total weights of satisfied equations is $1-\eps$.}.
\end{theorem}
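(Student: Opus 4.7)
The plan is to extend Trevisan's recursive spectral scheme for \textsf{MAX-CUT} to \maxlin\ via the Hermitian Laplacian $L_\ists$ and the Cheeger inequality announced in the introduction. At each recursion level the algorithm computes (an approximation to) an eigenvector realising $\lambda_1(L_\ists)$, applies a $k$-way sweep-type rounding of this vector to obtain a partial assignment $\asn$ on a subset $S$ of variables, commits these values, and recurses on the induced sub-instance over $[n]\setminus S$. To initiate the analysis I would first certify that $\lambda_1(L_\ists) = O(\eps)$: an optimal assignment, encoded as a unit vector in $\CCC^n$ whose coordinates are appropriate $k$-th roots of unity, has Rayleigh quotient with respect to $L_\ists$ of order $\eps$. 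Plugging this into the Cheeger inequality gives a partial assignment on $S$ that mis-satisfies at most an $O(k)\sqrt{\eps}$-fraction of the equations with both endpoints in $S$, while $|S|$ remains large enough that the recursion terminates in $O(\log n)$ levels.

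The main obstacle is to argue that the $O(k)\sqrt{\eps}$ per-level loss does not accumulate across the $O(\log n)$ recursion levels. To handle this I would charge every equation to the level at which its two endpoints are first both assigned: at that level it is either satisfied or counted among the local slack. Writing $\eps_i$ for the local slack at level $i$ (the fraction of total weight lost at level $i$) and $\lambda_1^{(i)}$ for the smallest eigenvalue of the sub-instance Laplacian at that level, the per-level Cheeger bound gives $\eps_i \le O(k)\sqrt{\lambda_1^{(i)}}$. A restriction argument shows $\sum_i \lambda_1^{(i)}\cdot w_i \le \eps\cdot m$, where $w_i$ is the total equation weight surviving at level $i$, since the global optimum restricted to any surviving subset remains a valid assignment on the sub-instance. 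Applying Cauchy--Schwarz to $\sum_i \sqrt{\lambda_1^{(i)}}\cdot w_i$ then yields a total loss of $O(k)\sqrt{\eps}\cdot m$, as required; some care is needed to handle equations that cross between $S$ and $[n]\setminus S$, which I would assign conservatively to the level that generated them so that their contribution is absorbed into the local slack rather than propagated downwards.

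For the $\widetilde{O}(kn^2)$ running time, naive handling of $L_\ists$ is too costly when the equation graph is dense. I would first spectrally sparsify $L_\ists$ to a Hermitian Laplacian $\LP$ supported on $\widetilde{O}(n)$ non-zero entries and spectrally close to $L_\ists$; subsequent eigenvector approximations then reduce to running a shifted power iteration (or Lanczos method) on $\LP$, costing $\widetilde{O}(n)$ per iteration. Sparsification is the dominant term and accounts for the $\widetilde{O}(kn^2)$ factor, with the $k$ arising from the complexity of manipulating the complex edge weights that encode residues modulo $k$; the $O(\log n)$ recursion depth together with polylogarithmic iteration counts contribute only polylogarithmic overhead, yielding the claimed bound.
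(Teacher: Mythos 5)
Your overall architecture (easy direction of the Cheeger inequality to certify $\lambda_1(\calL_\ists)=O(\eps)$, $k$-way sweep rounding, recursion on the equations with both endpoints unassigned, sparsification for speed) is the paper's, but the step that makes the recursion work is wrong as written. The inequality $\sum_i \lambda_1^{(i)} w_i \le \eps m$ does not follow from the restriction argument: the sub-instances are nested, so the equations violated by the global optimum can survive to every later level, and all the restriction gives is the per-level bound $\lambda_1^{(i)} \le 2\eps_i \le 2\eps m / w_i$ for each $i$ \emph{separately}; summing over levels yields $\eps m$ times the recursion depth, not $\eps m$ (and similarly $\sum_i w_i$ can be as large as depth times $m$, which your Cauchy--Schwarz step also needs to avoid). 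The depth itself is not $O(\log n)$: the sweep rounding guarantees only that at least one variable gets assigned per level, so the recursion can run for $\Theta(n)$ levels, and nothing in your argument forces $|S|$ to be large. Even under the charitable reading in which you charge each level only for the weight it removes, $\Delta_i = w_i - w_{i+1}$, Cauchy--Schwarz gives $\sum_i \sqrt{\lambda_1^{(i)}}\,\Delta_i \le \sqrt{2\eps m\sum_i \Delta_i/w_i}\cdot\sqrt{m}$, and $\sum_i \Delta_i/w_i$ is a harmonic-type sum of order $\log$ (unbounded for general weights), so you would only get $1-O(k)\sqrt{\eps\log n}$. The paper (following Trevisan) avoids any depth dependence by bounding the level-$i$ loss by $O(k)\sqrt{\lambda_1^{(i)}}\,\Delta_i \le O(k)\sqrt{\eps m}\,\Delta_i/\sqrt{w_i}$ and using the convergent telescoping bound $\sum_i \Delta_i/\sqrt{w_i} \le \int_0^{m} w^{-1/2}\,dw = 2\sqrt{m}$; this integral trick is exactly what your Cauchy--Schwarz aggregation fails to replace.

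Two further points. First, the runtime: since the depth can be linear, the paper's accounting is (up to $n$ iterations) $\times$ $\widetilde{O}(kn)$ per iteration after sparsification, not ``sparsification-dominated with $O(\log n)$ levels''; your final figure coincides but for the wrong reason, and to get a vector with Rayleigh quotient at most $(1+2\delta)\lambda_1$ when $\lambda_1$ may be tiny one needs inverse power iteration with a nearly-linear-time connection-Laplacian solver, not a plain shifted power/Lanczos iteration. Second, sparsifying the Hermitian Laplacian $L_\ists$ itself is not enough: its quadratic form on an assignment vector weights an unsatisfied equation by $\norm{1-\omega_k^j}^2$, which can be as small as $\Theta(1/k^2)$ of the maximum, so a spectral sparsifier of $L_\ists$ preserves the unsatisfied weight only up to a poly$(k)$ factor and would degrade the guarantee beyond $O(k)\sqrt{\eps}$. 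The paper instead sparsifies the label-extended (real) Laplacian $\LP_\ists$, whose quadratic form on indicator vectors equals exactly twice the unsatisfied weight, obtaining a valid \maxlin\ instance with $O(nk\log(nk))$ equations and a $1\pm\delta$ guarantee for every assignment.
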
 

Our result can be viewed as a generalisation of  the \textsf{MAX-CUT} algorithm by Trevisan~\cite{Trevisan09}, who derived a Cheeger inequality that relates the value of the maximum cut to the smallest eigenvalue of an undirected graph's adjacency matrix. The proof of Trevisan's Cheeger inequality, however, is based on constructing sweep sets in $\RRR$, while in our setting  constructing
 sweep sets in $\CCC$ is needed, as the underlying graph defined by $L_{\ists}$ is directed and eigenvectors of $L_{\ists}$ are in $\CCC^n$. 
 The other difference between our result and the one in \cite{Trevisan09} is that
 the goal of the \textsf{MAX-CUT} problem is to find a \emph{bipartition} of the vertex set, while for the \maxlin\ problem we need to
 use an eigenvector to find $k$ vertex-disjoint subsets, which corresponds to subsets of variables assigned to the same value. 
%We think this approach to prove a Cheeger-type inequality is quite novel, and some of the methods we develop in the proof will be of further use. \textcolor{blue}{Not convinced by this sentence}

 Our approach also shares some similarities with the one by Goemans and Williamson~\cite{GW04}, who presented a $0.793733$-approximation algorithm for \textsf{MAX-2-LIN(3)} based on Complex Semidefinite Programming. The objective function of their SDP relaxation is, in fact, exactly the quadratic form of our Hermitian Laplacian matrix $L_{\ists}$, although this matrix was not explicitly defined in their paper. In addition, their rounding scheme divides the complex unit ball into $k$ regions according to the angle with a random vector, which is  part of our rounding scheme as well. Therefore, if  one views Trevisan's work~\cite{Trevisan09}  as a spectral analogue to the celebrated SDP-based algorithm for \textsf{MAX-CUT} by Goemans and Williamson~\cite{GW95}, our result can be seen as a spectral analogue to the  Goemans and Williamson's algorithm for \maxlin. 
 
 We further prove that, when the undirected graph associated with a \maxlin\ instance is an expander,
the approximation ratio from Theorem~\ref{thm:main1}
can be improved. Our result is formally stated as follows:

\begin{theorem}\label{thm:res2}
    Let $\ists$ be an instance of \maxlin\ on a $d$-regular graph with $n$ vertices\
     and suppose its optimum is $1 - \eps$.
    %Let $\pnt(\ists)$ be the imperfectness of $\ists$ defined
    %in Definition~\ref{def:perfect}.
    %It holds that
    %\begin{align}\label{eq:cheeger2}
    %    \frac{\lambda_1\kh{\calL_\ists}}{2}
    %    \leq \pnt(\ists) \leq
    %    \eps \leq
    %    O(k^2)\cdot \frac{\lambda_1\kh{\calL_\ists}}{\lambda_2^3\kh{\calL_{\calG}}}.
    %\end{align}
     There is an $\Otil\left(nd + \frac{n^{1.5}}{k\sqrt{\eps}}\right)$-time algorithm that
    returns an
    assignment $\asn: V\to [k]$
    satisfying at least a
    \begin{align}\label{eq:thm2eq}
        1 - O(k^2)\cdot \frac{\eps}{\lambda_2^3(\calL_{\calG})}
    \end{align}
    fraction of equations in $\ists$,
    where $\lambda_2(\calL_{\calG})$
    is the second smallest eigenvalue of the normalised Laplacian matrix of the underlying
    undirected graph $\calG$.
    \end{theorem}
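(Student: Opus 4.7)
My strategy is to exploit the expansion of $\mathcal{G}$ directly at the level of the bottom eigenvector of the Hermitian Laplacian $\mathcal{L}_\ists$, thereby avoiding the $\sqrt{\eps}$ loss of the Cheeger-style sweep used in Theorem~\ref{thm:main1} and replacing it with an $\eps/\lambda_2^3(\mathcal{L}_\mathcal{G})$ dependence. First I would compute an approximate bottom eigenvector $\yy \in \CCC^n$ of $\mathcal{L}_\ists$. The phase vector $\xx^\star$ with $\xx^\star_v := \omega_k^{\asn^\star(v)}$ of an optimal assignment $\asn^\star$ certifies $\lambda_1(\mathcal{L}_\ists) \leq 2\eps$: each of the $\eps m$ unsatisfied edges contributes at most $4$ to $(\xx^\star)^\ast L_\ists \xx^\star$, while the denominator $(\xx^\star)^\ast D \xx^\star$ equals $2m$. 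Combining the spectral sparsification already used in Theorem~\ref{thm:main1} (in $\widetilde{O}(nd)$ time) with accelerated power iteration on the sparsified matrix (needing $\widetilde{O}(\sqrt{n/\lambda_1})$ iterations at $\widetilde{O}(n)$ cost each) accounts for both terms in the claimed runtime.

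Writing $\yy_v = r_v e^{i\theta_v}$, the Rayleigh bound unfolds to $\sum_{(u,v)\in E}|\yy_u - \omega_k^{c_{uv}}\yy_v|^2 \leq O(\eps d)\|\yy\|^2$, and the triangle-inequality bound $\bigl||\yy_u|-|\yy_v|\bigr| \leq |\yy_u - \omega_k^{c_{uv}}\yy_v|$ yields $\mathbf{r}^\top L_\mathcal{G}\mathbf{r} \leq O(\eps d)\|\yy\|^2$ for the magnitude vector $\mathbf{r} := (r_v)_v$. The expansion of $\mathcal{G}$ then forces $\|\mathbf{r} - \bar r\,\mathbf{1}\|^2 \leq O(\eps/\lambda_2(\mathcal{L}_\mathcal{G}))\|\yy\|^2$, so the magnitudes concentrate around the mean $\bar r$ outside a small ``bad'' vertex set. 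The identity $|\yy_u - \omega_k^{c_{uv}}\yy_v|^2 = (r_u-r_v)^2 + 2r_u r_v\,(1-\cos\phi_e)$ then isolates the angular discrepancies $\phi_e$ on the good edges.

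For the rounding I would pick a uniform random phase $\beta \in [0, 2\pi/k)$ and assign to each vertex $v$ the index of the sector containing $\yy_v$ in the $\beta$-rotated partition of $\CCC$ into $k$ congruent sectors. For an edge whose endpoints are both good, the probability of violation over $\beta$ is $(k/2\pi)\phi_e$. Separating edges by the size of $\phi_e$---edges with $\phi_e \lesssim 1/k$ can be charged to a deterministically chosen $\beta$ avoiding their ``bad'' interval, while the number of edges with $\phi_e \gtrsim 1/k$ is bounded via Markov applied to $\sum_e r_u r_v \phi_e^2 \leq O(\eps d)\|\yy\|^2$---gives an $O(k^2)\eps$ contribution from the good edges. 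A further application of expansion bounds the \emph{degree sum} of the low-magnitude vertices in terms of the $\ell_2$-concentration of $\mathbf{r}$, adding the remaining factor of $\lambda_2^{-1}$ and producing the claimed $O(k^2)\,\eps/\lambda_2^3(\mathcal{L}_\mathcal{G})$ bound on the fraction of violated equations.

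The main obstacle is the iterated use of expansion on the low-magnitude vertices, whose phases are essentially arbitrary and each of whose incident edges could in principle be violated. Passing from the $\ell_2$-concentration of $\mathbf{r}$ to a bound on the degree sum (rather than the count) of bad vertices requires a second application of expansion, and converting the per-edge Laplacian form into a per-edge linear-in-$\eps$ bound on violations requires a third application via the Markov/deterministic-$\beta$ argument. Verifying that these three invocations of expansion compose multiplicatively into exactly $\lambda_2^{-3}(\mathcal{L}_\mathcal{G})$, and that the deterministic choice of $\beta$ does not secretly reintroduce a $\sqrt{\eps}$ loss (which a naive Cauchy--Schwarz on $\sum_e \phi_e$ would), is the most delicate part of the argument.
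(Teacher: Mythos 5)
There is a genuine gap, and it lies exactly where you flag the argument as ``delicate'': your only global use of expansion is on the magnitude vector $\mathbf{r}$, while the phases are controlled purely locally through the per-edge discrepancies $\phi_e$ with $\sum_e \phi_e^2 = O(\eps m)$. From this local information the sector rounding cannot give a bound linear in $\eps$: under a random rotation $\beta$ the expected fraction of violated ``good'' edges is $(k/2\pi)\sum_e \phi_e / m$, and the proposed fix for the small-angle edges --- ``a deterministically chosen $\beta$ avoiding their bad interval'' --- does not exist in general, because each edge has its own bad interval and the total measure of these intervals typically exceeds the period $2\pi/k$; the best a choice of $\beta$ can guarantee is the average, i.e.\ $(k/2\pi)\sum_e\phi_e$, which by Cauchy--Schwarz is only $O(k\sqrt{\eps})\,m$ --- precisely the $\sqrt{\eps}$ loss you set out to avoid. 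Your Markov step handles edges with $\phi_e\gtrsim 1/k$, but the slowly-drifting-phase regime (e.g.\ every edge with $\phi_e\approx\sqrt{\eps}$) is consistent with all the facts you use (Rayleigh quotient $O(\eps)$, uniform magnitudes), yet is not rounded to within $O(k^2)\eps$ by any single $\beta$. Relatedly, the claim that three invocations of expansion ``compose multiplicatively into $\lambda_2^{-3}$'' is asserted rather than derived; bounding the degree sum of low-magnitude vertices only costs regularity, not an extra $\lambda_2^{-1}$.

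The missing ingredient is a \emph{global phase-alignment} statement, which is how the paper actually proceeds: it introduces the fully satisfiable reference instance $\IH$ obtained by un-corrupting the $\eps$-fraction of equations, shows $\calL_{\IH}$ has the same spectrum as $\calL_{\calG}$ (Lemma~\ref{lem:lambdas}), bounds the perturbation $\norm{(\calL_\ists-\calL_{\IH})f_1}\le 20\sqrt{\eps/\lambda_2}$ using the magnitude concentration you also derive (Lemmas~\ref{lem:6.2} and~\ref{lem:dsk}), and then runs a Davis--Kahan argument (Lemma~\ref{lem:span}) to conclude that $f_1$ is $O(\sqrt{\eps/\lambda_2^3})$-close in $\ell_2$ to a \emph{globally rotated} indicator vector of the optimal assignment; this is where all three factors of $\lambda_2^{-1}$ arise (one from magnitude concentration, two from dividing by the gap in Davis--Kahan). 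The rounding then simply enumerates $O(\sqrt{n}/(k\sqrt{\eps}))$ candidate global rotations $\eta$ (this, not power iteration, is the source of the $n^{1.5}/(k\sqrt{\eps})$ term), rounds each coordinate to the nearest rotated root of unity, and converts the $\ell_2$ closeness into an $O(k^2)\eps/\lambda_2^3$ bound on violations via the quadratic form, using $\norm{1-\omega_k^j}^2\ge 1/(10k^2)$. Without some analogue of this global argument against $\calL_{\IH}$, your local angular budget cannot yield the theorem.
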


%Our proof technique is inspired by Kolla's work~\cite{kolla11}, and can be also viewed as an improved Cheeger inequality for the \maxlin\ problem. Our improved Cheeger inequality is  similar with the improved Cheeger inequality for graphs~\cite{KwokLLGT13}, and  implies that the \maxlin\ problem is easier for expander graphs, which was shown through a different analytical technique~\cite{stoc/AroraKKSTV08}. Our second result is summarised as follows:
Our technique is similar to the one by Kolla~\cite{kolla11},
which was used to show that solving the \maxlin\ problem on expander graphs is easier.
In \cite{kolla11}, a \maxlin\ instance is represented by the label-extended graph, and the algorithm  is
based on an exhaustive search
in a subspace spanned by eigenvectors associated with eigenvalues close to $0$.
%By using results from matrix perturbation theory, they showed that the indicator vector
%of a highly satisfying assignment is encoded in this subspace,
%and therefore the exhaustive search will give us a good solution.
When the underlying graph of the \maxlin\ instance has good expansion,
this subspace is of dimension $k$.
Therefore, the exhaustive search runs in
time $O\left(2^k + \mathrm{poly}(n\cdot k)\right)$, which is polynomial-time when $k = O(\log n)$.
Comparing with the work in \cite{kolla11}, we show that, 
 when the underlying graph has good expansion, the eigenvector associated with the smallest eigenvalue $\lambda_1(\calL_\ists)$ of the Hermitian Laplacians suffices to give a good approximation.  
We notice that Arora et al.~\cite{stoc/AroraKKSTV08} already showed that, for expander graphs, it is possible to 
satisfy a $1-O(\eps \log(1/\eps))$ fraction of equations in polynomial time without any dependency on $k$. Their algorithm is based on
an SDP relaxation.

%\textcolor{red}{\emph{Here we start to discuss the 2nd result.}}
%
%We further prove that, when the undirected graph associated with an \maxlin\ instance is an expander,  the Cheeger inequality from Theorem~\ref{thm:main1}
%can be further improved.  Our proof technique is inspired by Kolla's work~\cite{kolla11}, and can be also viewed as an improved Cheeger inequality for the \maxlin\ problem. Our improved Cheeger inequality is  similar with the improved Cheeger inequality for graphs~\cite{KwokLLGT13}, and  implies that the \maxlin\ problem is easier for expander graphs, which was shown through a different analytical technique~\cite{stoc/AroraKKSTV08}. Our second result is summarised as follows:
%
%\begin{theorem}
%\textcolor{red}{State the 2nd result here.}
%\end{theorem}

%Moreover, inspired by the  techniques from \cite{kolla11}, we show that this Cheeger inequality can be tighten whenever the undirected graph underlining a \maxlin\ instance is an expander. This, in a certain sense, can be seen as an improved Cheeger inequality as the ones studied by Kwok et al.~\cite{KwokLLGT13}, and it agrees with previous work showing that  \maxlin\ is easier for expander graphs~\cite{stoc/AroraKKSTV08}. We refer to \textcolor{blue}{REF} for a precise formulation of this statement.

\paragraph{Other related work.}
There are many research results for the \maxlin\ problem~(e.g., \cite{Kho02a,trevisan05,charikar06,gupta06}), and  we briefly discuss the ones most closely related to our work. 
For the \maxlin\ problem and Unique Games, spectral techniques are usually employed to analyse the Laplacian matrix of the so-called Label-Extended graphs. Apart from the above-mentioned result~\cite{kolla11}, Arora, Barak and Steurer~\cite{abs}
 obtained an  $\mathrm{exp}\left((kn)^{O(\eps)}\right)\mathrm{poly}(n)$-time algorithm for Unique Games, whose algorithm makes use of Label-Extended graphs as well.
 We also  notice that  the adjacency matrix corresponding to our Hermitian Laplacian was  considered by Singer~\cite{singer11} in relation to an angular synchronisation problem. The connection between the eigenvectors of such matrix and the \maxlin\ problem was also mentioned, but without offering formal approximation guarantees.

\section{Hermitian Matrices for \maxlin \label{sec:preliminiares}}

We can write an instance of \maxlin\ by $\ists = (G, k)$, where $G = (V,E,\weit,\cst)$ denotes
a directed graph with an edge weight function $\weit: E\to \mathbb{R}^+$
and an edge color function $\cst: E\to [k]$, where $[k]\defeq \setof{0,1,\ldots,k-1}$.
More precisely, every equation $u_i - v_i \equiv \cst_i \hmod k$ with weight $\weit_i$
corresponds to a directed edge $(u_i, v_i)$ with weight $\weit(u_i,v_i) = \weit_{u_iv_i} = \weit_i$ and color $\cst(u_i,v_i) = \cst_{u_iv_i} = \cst_i$.
In the rest of this paper, we will assume that $G$ is weakly connected, and write $u\leadsto  v$ if there is a directed edge from $u$ to $v$.  The conjugate transpose of any vector $x\in\CCC^n$ is denoted by $x^*$.%Let $[k]\defeq\{0,1,\ldots, k-1\}$.

We define the Hermitian adjacency matrix $A_\ists\in\CCC^{n\times n}$ for instance $\ists$ by 
\begin{align}
    (A_\ists)_{uv} \defeq
    \begin{cases}
        %\deg(u) & u = v \\
        \weit_{uv}\omega_k^{\cst_{uv}} & u\leadsto v, \\
        \weit_{vu}\conj{\omega_k}\,^{\cst_{vu}} & v\leadsto u, \\
        0 & \mathrm{otherwise},
    \end{cases}
\end{align}
where $\omega_k = \exp\left(\frac{2\pi i}{k}\right)$ is the complex
$k$-th root of unity,
and $\conj{\omega_k} = \exp\left(-\frac{2\pi i}{k}\right)$ is its conjugate.
%If there are multiple edges between two vertices,
%we just replace them with a new edge with their total edge weights and costs.
We define the degree-diagonal matrix $D_\ists$ by
$(D_\ists)_{uu} = d_u$ where $d_u$ is the weighted degree given by
\begin{align}
    d_u \defeq \sum_{u\leadsto v} \weit_{uv}
    + \sum_{v\leadsto u} \weit_{vu}.
\end{align}
The Hermitian Laplacian matrix is then defined by $L_{\ists} = D_{\ists} - A_{\ists}$,
and the corresponding normalised Laplacian matrix by
$\calL_{\ists} = D_{\ists}^{-1/2} L_{\ists} D_{\ists}^{-1/2} = I - D_{\ists}^{-1/2} A_{\ists} D_{\ists}^{-1/2}$. The eigenvalues of 
any matrix $A$ are expressed by $\lambda_1(A)\leq \ldots\leq \lambda_n(A)$.
The quadratic forms of $L_{\ists}$ can be related to the corresponding instance of  \maxlin\ by the following lemma.

\begin{lemma}\label{lem:qL}
    For any vector $x \in \mathbb{C}^n$, we have
    \begin{align}
        x^* L_\ists x = \sum_{u\leadsto v} \weit_{uv} \norm{x_u - \omega_k^{\cst_{uv}} x_v}^2
    \end{align}
    and
    \begin{align}
        x^* L_\ists x = 2\sum_{u\in V} d_u \norm{x_u}^2 - \sum_{u\leadsto v} \weit_{uv} \norm{x_u + \omega_k^{\cst_{uv}} x_v}^2.
    \end{align}
\end{lemma}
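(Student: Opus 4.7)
The plan is to verify both identities by direct expansion, splitting $x^* L_\ists x = x^* D_\ists x - x^* A_\ists x$ and then regrouping the contributions edge by edge. For the diagonal part, the definition of $d_u$ together with the fact that every directed edge $u\leadsto v$ contributes $\weit_{uv}$ to both $d_u$ and $d_v$ gives the identity
\[
\sum_{u\in V} d_u \norm{x_u}^2 \;=\; \sum_{u\leadsto v} \weit_{uv}\bigl(\norm{x_u}^2 + \norm{x_v}^2\bigr),
\]
which will be the bridge between $x^* D_\ists x$ and the edge sums on the right-hand sides.

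For the off-diagonal part, I would use Hermiticity: every undirected pair $\{u,v\}$ with $u\leadsto v$ contributes two terms to $x^* A_\ists x$, namely $\conj{x_u}(A_\ists)_{uv}x_v$ and $\conj{x_v}(A_\ists)_{vu}x_u$, which by definition of $A_\ists$ equal $\weit_{uv}\conj{x_u}\omega_k^{\cst_{uv}}x_v$ and its complex conjugate. Summing gives
\[
x^* A_\ists x \;=\; \sum_{u\leadsto v} 2\,\weit_{uv}\,\operatorname{Re}\bigl(\conj{x_u}\omega_k^{\cst_{uv}}x_v\bigr).
\]
Since $\sizeof{\omega_k^{\cst_{uv}}}=1$, the polarisation identity yields
\[
\norm{x_u - \omega_k^{\cst_{uv}}x_v}^2 \;=\; \norm{x_u}^2 + \norm{x_v}^2 - 2\,\operatorname{Re}\bigl(\conj{x_u}\omega_k^{\cst_{uv}}x_v\bigr),
\]
and combining this with the diagonal identity above immediately gives the first formula.

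For the second formula, I would apply the analogous polarisation identity with a plus sign,
\[
\norm{x_u + \omega_k^{\cst_{uv}}x_v}^2 \;=\; \norm{x_u}^2 + \norm{x_v}^2 + 2\,\operatorname{Re}\bigl(\conj{x_u}\omega_k^{\cst_{uv}}x_v\bigr),
\]
so that $\sum_{u\leadsto v}\weit_{uv}\norm{x_u+\omega_k^{\cst_{uv}}x_v}^2 = \sum_{u} d_u\norm{x_u}^2 + x^* A_\ists x$. Subtracting this from $2\sum_u d_u\norm{x_u}^2$ then gives exactly $\sum_u d_u\norm{x_u}^2 - x^* A_\ists x = x^* L_\ists x$, which is the desired second identity.

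There is no real obstacle here: the only thing to be a little careful about is the accounting of each undirected pair, ensuring that a directed edge $u\leadsto v$ is counted once in the edge sums on the right-hand side while contributing to both $d_u$ and $d_v$, and to both the $(u,v)$ and $(v,u)$ entries of $A_\ists$. Once that bookkeeping is fixed, the rest is just the standard polarisation identity applied entrywise in $\CCC$.
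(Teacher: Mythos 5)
Your proof is correct and follows essentially the same route as the paper: both expand $x^* L_\ists x = x^* D_\ists x - x^* A_\ists x$ edge by edge and use the identity $\norm{x_u \mp \omega_k^{\cst_{uv}} x_v}^2 = \norm{x_u}^2 + \norm{x_v}^2 \mp 2\operatorname{Re}\kh{\conj{x_u}\,\omega_k^{\cst_{uv}} x_v}$, which the paper writes as an explicit expansion of $\kh{\conj{x_u} \mp \conj{x_v}\,\conj{\omega_k}^{\cst_{uv}}}\kh{x_u \mp \omega_k^{\cst_{uv}} x_v}$. The bookkeeping of each directed edge contributing to both $d_u$ and $d_v$ and to both entries $(A_\ists)_{uv}$, $(A_\ists)_{vu}$ matches the paper's computation exactly.
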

\begin{proof}
    For any vector $x \in \mathbb{C}^n$,
    we can write
    \begin{align}
        x^* A_\ists  x &=
        \sum_{u\leadsto v}
        \weit_{uv} \kh{ \conj{x_u} \omega_k^{\cst_{uv}} x_v
        + \conj{x_v}\,\conj{\omega_k}^{\cst_{uv}} x_u } \notag \\
        &= - \sum_{u\leadsto v}\weit_{uv}  \kh{ 
            \kh{ \conj{x_u} - \conj{x_v}\, \conj{\omega_k}\,^{\cst_{uv}}}
            \kh{ x_u - \omega_k^{\cst_{uv}} x_v }
    - \norm{x_u}^2 - \norm{x_v}^2 } \notag \\
    &= \sum_{u\in V} d_u \norm{x_u}^2-
    \sum_{u\leadsto v} \weit_{uv} \norm{x_u - \omega_k^{\cst_{uv}} x_v}^2.
    \end{align}
    We can also write
    \begin{align}
        x^* A_\ists  x &=
        \sum_{u\leadsto v}
        \weit_{uv} \kh{ \conj{x_u} \omega_k^{\cst_{uv}} x_v
        + \conj{x_v}\,\conj{\omega_k}^{\cst_{uv}} x_u } \notag \\
        &=\sum_{u\leadsto v}\weit_{uv}
        \kh{  \kh{ \conj{x_u} + \conj{x_v}\, \conj{\omega_k}\,^{\cst_{uv}}} \kh{ x_u + \omega_k^{\cst_{uv}} x_v }
    - \norm{x_u}^2 - \norm{x_v}^2 } \notag \\
    &= - \sum_{u\in V} d_u \norm{x_u}^2+
    \sum_{u\leadsto v} \weit_{uv} \norm{x_u + \omega_k^{\cst_{uv}} x_v}^2.
    \end{align}
 Combining these with $x^* D_\ists x = \sum\nolimits_{u\in V} d_u\norm{x_u}^2$ finishes
    the proof.
   \end{proof}

The lemma below presents a qualitative relationship between the eigenvector associated with $\lambda_1(\nL_\ists)$ and an assignment of $\ists$.

\begin{lemma}\label{lem:spectrum}
    All eigenvalues of $\nL_{\ists}$ are in the range $[0,2]$.
    Moreover, $\lambda_1(\nL_\ists) = 0$ if and only if
    there exists an assignment satisfying
    all equations in $\ists$.
\end{lemma}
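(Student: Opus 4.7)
The plan is to derive both claims from the two quadratic-form identities of Lemma~\ref{lem:qL}. For the spectrum bounds, I will substitute $y = D_\ists^{-1/2} x$ so that $x^* \calL_\ists x = y^* L_\ists y$, and then apply the two forms of Lemma~\ref{lem:qL} to $y$. The first form immediately gives $x^*\calL_\ists x \ge 0$, so $\calL_\ists$ is positive semidefinite. The second form gives
\[
x^* \calL_\ists x \;=\; 2\sum_{u\in V} d_u \|y_u\|^2 \;-\; \sum_{u\leadsto v}\weit_{uv}\|y_u+\omega_k^{\cst_{uv}} y_v\|^2 \;\le\; 2\sum_{u\in V} d_u \|y_u\|^2.
\]
Since $\sum_u d_u \|y_u\|^2 = y^* D_\ists y = x^* D_\ists^{-1/2} D_\ists D_\ists^{-1/2} x = x^* x$, this shows $x^* \calL_\ists x \le 2 x^* x$. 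Hence every eigenvalue of $\calL_\ists$ lies in $[0,2]$.

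For the ``if and only if'' part, observe that $\lambda_1(\calL_\ists)=0$ holds iff there is a nonzero $x\in\CCC^n$ with $\calL_\ista x = 0$, equivalently (letting $y=D_\ists^{-1/2}x\neq \zero$) with $L_\ista y = 0$. Because $L_\ists$ is PSD, the first identity in Lemma~\ref{lem:qL} forces
\[
y_u \;=\; \omega_k^{\cst_{uv}}\, y_v \qquad \text{for every edge } u\leadsto v. \qquad (\star)
\]
I will then argue in two directions.

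For the forward direction, $(\star)$ implies $|y_u| = |y_v|$ along every edge. Since $G$ is weakly connected and $y\neq \zero$, all $|y_u|$ equal some common $r>0$. Fix any vertex $v_0$ and write $y_{v_0} = r e^{i\theta}$. For any other $v$, choose an undirected path from $v_0$ to $v$ and apply $(\star)$ (or its conjugate, depending on edge orientation) along the path: each step multiplies the argument by a $k$-th root of unity. Thus $y_v = r e^{i\theta}\,\omega_k^{a_v}$ for some integer $a_v$, well-defined modulo $k$. Setting $\asn(v)\defeq a_v \bmod k$, relation $(\star)$ becomes $\omega_k^{\asn(u)} = \omega_k^{\cst_{uv}+\asn(v)}$, i.e.\ $\asn(u)-\asn(v)\equiv \cst_{uv}\hmod k$, so $\asn$ satisfies every equation of $\ists$.

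Conversely, given an assignment $\asn$ satisfying all equations, define $y_v \defeq \omega_k^{\asn(v)}$ and $x \defeq D_\ists^{1/2}y$. Then $y$ satisfies $(\star)$, so by Lemma~\ref{lem:qL} we obtain $x^*\calL_\ists x = y^* L_\ists y = 0$, and since $\calL_\ists\succeq 0$ this forces $\lambda_1(\calL_\ists)=0$.

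The only slightly delicate step is the argument producing a well-defined $a_v\in\ZZZ/k\ZZZ$ from $(\star)$ using weak connectivity; this is a short propagation argument but is where the assumption that $G$ is weakly connected is used. Everything else is a direct consequence of Lemma~\ref{lem:qL} and the change of variables $y = D_\ists^{-1/2}x$.
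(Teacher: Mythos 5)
Your proposal is correct and follows essentially the same route as the paper: both use the two quadratic-form identities of Lemma~\ref{lem:qL} to place the spectrum in $[0,2]$ (the paper via the generalised Rayleigh quotient $x^*L_\ists x / x^* D_\ists x$, you via the change of variables $y = D_\ists^{-1/2}x$, which is equivalent), and both characterise $\lambda_1(\calL_\ists)=0$ by the vanishing of $\sum_{u\leadsto v}\weit_{uv}\norm{x_u-\omega_k^{\cst_{uv}}x_v}^2$. The only difference is that you spell out the weak-connectivity propagation argument linking a nonzero null vector to a satisfying assignment, which the paper asserts in one line; aside from the stray undefined macro ``$\ists$a'' in your write-up, nothing needs changing.
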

\begin{proof}
To bound the eigenvalues of $\nL_{\ists}$, we look at the 
 following Rayleigh quotient
    \[
        \frac{x^* L_\ists x}{x^* D_\ists x},
\]
    where $x \neq 0$.
    By Lemma~\ref{lem:qL}, the numerator satisfies
    \[
        x^* L_\ists x = \sum_{u\leadsto v} \weit_{uv} \norm{x_u - \omega_k^{\cst_{uv}} x_v}^2 \geq 0
    \]
    and also
    \[
        x^* L_\ists x = 2\sum_{u\in V} d_u \norm{x_u}^2 - \sum_{u\leadsto v} \weit_{uv} \norm{x_u + \omega_k^{\cst_{uv}} x_v}^2
        \leq 2\sum_{u\in V} d_u \norm{x_u}^2 = 2 x^* D_\ists x.
 \]
    Therefore, the eigenvalues of $\calL_\ists$ lie in the range $[0,2]$.
    Moreover, $\lambda_1(\calL_\ists) = 0$ if and only if
    there exists an $x \in \mathbb{C}^n$
    such that $x^* L_\ists x = 0$, i.e.,
    \[
        \norm{x_u - \omega_k^{\cst_{uv}} x_v}^2 = 0
    \]
    holds for all $u\leadsto v$.
    The existence of such an $x$ is equivalent to the existence
    of an assignment satisfying all equations in $\ists$.
\end{proof}

\section{A Cheeger inequality for $\lambda_1(\nL_\ists)$ and \maxlin \label{sec:cheeger}}

The discrete Cheeger inequality~\cite{Alon86}
shows that, for any undirected graph $G$, the conductance $h_G$ of $G=(V,E)$ can be approximated by  the second smallest eigenvalue of $G$'s normalised Laplacian matrix $\mathcal{L}_G$, i.e.,
\begin{equation}\label{eq:cheeger}
\frac{\lambda_2(\mathcal{L}_G)}{2} \leq h_G\leq \sqrt{2\cdot\lambda_2(\mathcal{L}_G)}.
\end{equation}
%These inequalities show that $\lambda_2(\mathcal{L}_G)=0$ implies that $G$ is disconnected. 
Moreover, the proof of the second inequality above is constructive, and indicates that  a subset $S\subset V$ with conductance at most $\sqrt{2\cdot \lambda_2(\mathcal{L}_G)}$
can be found by using the second bottom eigenvector of $\mathcal{L}_G$ to embed vertices on the real line.
As one of the most fundamental results in spectral graph theory, the Cheeger inequality has found applications in the study of a wide range of optimisation problems, e.g., graph partitioning~\cite{journals/jacm/LeeGT14}, max-cut~\cite{Trevisan09}, and many practical problems like image segmentation~\cite{ShiM00} and web search~\cite{Kleinberg99}.

In this section, we develop connections between $\lambda_1(\calL_\ists)$
and \maxlin\ by proving a Cheeger-type inequality.  Let
\[
\asn: \setof{x_1,\ldots,x_n} \to [k]\union\setof{\una}
\]
be an arbitrary \emph{partial assignment} of an instance $\mathcal{I}$, where $\phi(x_i)=\una$ means that the assignment of $x_i$ has not been decided. These variables' assignments will be determined through some recursive construction, which will be elaborated in Section~\ref{sec:recursion}.   We remark that this framework of recursively computing a partial assignment was first 
introduced by Trevisan~\cite{Trevisan09}, and our theorem can be viewed as a generalisation of the one in \cite{Trevisan09},
 which corresponds to the $k = 2$ case of ours.

 To relate quadratic forms of $\mathcal{L}_\ists$  with the objective function of the \maxlin\ problem, we introduce a \emph{penalty} function as follows:

\begin{definition}\label{def:pnt} Given a partial assignment $\asn: \setof{x_1,\ldots,x_n} \to [k]\union\setof{\una}$ and a directed edge $(u,v)$, the penalty of  $(u,v)$ with respect to $\asn$ is defined by 
\begin{align}
    \pnt^\asn_{uv}(\ists) \defeq
    \begin{cases}
        0 & \asn(u)\neq \una,\asn(v)\neq \una, \asn(u) - \asn(v) \equiv \cst_{uv} \hmod k \\
        1 & \asn(u)\neq \una,\asn(v)\neq \una, \asn(u) - \asn(v) \not\equiv \cst_{uv} \hmod k \\
        0 & \asn(u) = \asn(v) = \una \\
        1 - 1/k \qquad & \text{exactly one of $\asn(u),\asn(v)$ is $\una$.}
    \end{cases}
\end{align}
For simplicity, we write $\pnt^\asn_{uv}$ when the underlying instance $\ists$ is clear from the context.
\end{definition}
The values of $ \pnt^\asn_{uv} $ from Definition~\ref{def:pnt} are chosen according to the following facts: (1) If both $u$ and $v$'s values are assigned, then their penalty is $1$ if the equation $\asn(u) - \asn(v) \not\equiv \cst_{uv} \hmod k$ associated with $(u,v)$ is unsatisfied, and $0$ otherwise; (2) If both $u$ and $v$'s values are $\una$, then their penalty is temporally set to $0$.  Their penalty will be computed when $u$ and $v$'s assignment are determined during a later recursive stage;
(3) If exactly one of $u,v$ is assigned, $\pnt^\asn_{uv}$ is set to $1 - 1/k$,
since a random assignment
to the other variable makes the edge $(u,v)$ satisfied with probability $1/k$.
%hence $\pnt^\asn_{uv}$ is set to $1 -1/k$.

Without loss of generality, we only consider $\phi$ for which $\phi(u)\ne \una$ for at least one vertex $u$, and  define the penalty of assignment $\asn$ by
\begin{align}
    \pnt^\asn \defeq \frac{2 \sum_{u\leadsto v} \weit_{uv} \pnt^\asn_{uv}}{\vol(\asn)},
\end{align}
where $\vol(\asn) \defeq \sum_{\asn(u)\neq \una} d_u$.
Notice that the $\pnt^\asn_{uv}$'s value is multiplied by $\weit_{uv}$ in accordance with the objective  of \maxlin\ which is to maximise the total weight of satisfied assignments. Also, we multiply $\pnt^\asn_{uv}$ by $2$ in the numerator 
since edges with at least one assigned endpoint are counted at most twice  in $\vol(\asn)$.
Notice that, as long as $G$ is weakly connected,
$\pnt^\asn=0$ if and only if all edges are satisfied by $\asn$ and, in general, the smaller the value of $\pnt^{\asn}$, the more edges are satisfied by $\asn$. With this in mind, we 
 define the \textit{imperfectness} $\pnt(\ists)$ of $\ists$ to quantify 
  how close $\ists$ is
to an instance where all equations can be satisfied by a single assignment.

\begin{definition}\label{def:perfect} Given any \maxlin\ instance $\ists = (G, k)$, the imperfectness of $\ists$ is defined by 
\begin{align}
    \pnt(\ists) \defeq \min_{\asn\in \kh{[k]\union\setof{\una}}^V\setminus \setof{\una}^V}\, \pnt^\asn.
\end{align}
\end{definition}

The main result of this section is a Cheeger-type inequality that relates 
$\pnt(\ists)$ and $\lambda_1(\nL_\ists)$, which is summarised in Theorem~\ref{thm:cheeger_k}.
Note that, since $\sin(x)\geq (2/\pi)\cdot x$  for $x\in [0, \pi/2]$,
 the factor before $\sqrt{2\lambda_1}$ in the theorem statement
is at most $(2 +k/4)$ for $k\geq 2$.

\begin{theorem}\label{thm:cheeger_k}
    Let $\lambda_1$ be the smallest eigenvalue of $\nL_\ists$.
    It holds that 
\begin{align}\label{eq:cheegerd2}
    \frac{\lambda_1}{2} \leq \pnt(\ists) \leq \kh{2 - \frac{2}{k} + \frac{1}{2\sin (\pi/k) }} \sqrt{2 \lambda_1}.
\end{align}
Moreover, given the eigenvector associated with $\lambda_1$,
there is an $O(m +n\log n)$-time algorithm  that returns
a partial assignment $\asn$
such that
\begin{align}\label{eq:cheegereq4}
    \frac{\lambda_1}{2} \leq \pnt^\asn \leq
    \kh{2 - \frac{2}{k} + \frac{1}{2\sin (\pi/k) }} \sqrt{2 \lambda_1}.
\end{align}
\end{theorem}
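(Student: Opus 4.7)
The lower bound $\lambda_1\leq 2\pnt(\ists)$ comes from plugging a test vector into the Rayleigh quotient. Taking any partial assignment $\asn^*$ achieving $\pnt^{\asn^*}=\pnt(\ists)$, define $x_u:=\tru^{\asn^*(u)}$ when $\asn^*(u)\ne\una$ and $x_u:=0$ otherwise. A case check matching the three cases of Definition~\ref{def:pnt} shows $\weit_{uv}\norm{x_u-\tru^{\cst_{uv}}x_v}^2\leq 4\weit_{uv}\pnt^{\asn^*}_{uv}$ on every directed edge, while $x^*D_\ists x=\vol(\asn^*)$; the first identity of Lemma~\ref{lem:qL} then gives $\lambda_1\leq x^*L_\ists x/x^*D_\ists x\leq 2\pnt(\ists)$.

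For the upper bound and the algorithm, let $f$ be a $\lambda_1$-eigenvector of $\calL_\ists$, write $f_u=r_u e^{i\theta_u}$, set $M:=\max_u r_u^2$, and consider the randomised rounding that draws an \emph{angular shift} $\eta\in[0,2\pi/k)$ and a \emph{radial threshold} $\tau^2\in[0,M]$ uniformly and independently, then returns $\asn(u):=\lfloor(\theta_u-\eta)k/(2\pi)\rfloor\bmod k$ if $r_u^2\geq\tau^2$ and $\asn(u):=\una$ otherwise. Fix an edge $u\leadsto v$ and set $e:=\norm{f_u-\tru^{\cst_{uv}}f_v}$ and $\mu:=(\theta_u-\theta_v-2\pi\cst_{uv}/k)\bmod 2\pi$, with canonical representative in $[-\pi,\pi]$. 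A direct count of sector-boundary crossings in $[0,2\pi/k)$ gives $\Pr_\eta[\asn\text{ violates the }(u,v)\text{-equation}\mid\text{both assigned}]=\min(|\mu|k/(2\pi),\,1)$, and combining with the $\tau$-probabilities $\min(r_u^2,r_v^2)/M$ (both assigned) and $|r_u^2-r_v^2|/M$ (exactly one assigned) yields
\[
\E_{\eta,\tau}[\pnt^\asn_{uv}]=\tfrac{1}{M}\kh{\min\kh{\tfrac{|\mu|k}{2\pi},1}\min(r_u^2,r_v^2)+\kh{1-\tfrac{1}{k}}|r_u^2-r_v^2|}.
\]

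The key technical step is bounding this edge-wise by $(C_A+C_B)\,e\sqrt{r_u^2+r_v^2}/M$ with $C_A:=1/(2\sqrt{2}\sin(\pi/k))$ and $C_B:=\sqrt{2}(1-1/k)$. From $e^2=(r_u-r_v)^2+4r_ur_v\sin^2(\mu/2)$, the always-true bound $e\geq|r_u-r_v|$ handles the second term through $|r_u^2-r_v^2|\leq e(r_u+r_v)\leq\sqrt{2}\,e\sqrt{r_u^2+r_v^2}$ (giving $C_B$). For the first term, use $e\geq 2\sqrt{r_ur_v}|\sin(\mu/2)|$ together with either the concavity estimate $|\sin x|\geq (k\sin(\pi/k)/\pi)|x|$ on $[-\pi/k,\pi/k]$ when $|\mu|\leq 2\pi/k$, or the monotonicity bound $|\sin(\mu/2)|\geq\sin(\pi/k)$ when $|\mu|>2\pi/k$; both regimes collapse to $\min(|\mu|k/(2\pi),1)\min(r_u^2,r_v^2)\leq\sqrt{r_ur_v}\,e/(2\sin(\pi/k))\leq e\sqrt{r_u^2+r_v^2}/(2\sqrt{2}\sin(\pi/k))$, giving $C_A$. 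Summing over edges, applying Cauchy-Schwarz with $\sum\weit_{uv}e^2=f^*L_\ists f=\lambda_1 f^*D_\ists f$ and $\sum\weit_{uv}(r_u^2+r_v^2)=f^*D_\ists f$, and using $\E[\vol(\asn)]=f^*D_\ists f/M$, yields $\E[2\sum\weit_{uv}\pnt^\asn_{uv}]\leq 2(C_A+C_B)\sqrt{\lambda_1}\,\E[\vol(\asn)]$. A standard averaging argument (conditioning on $\vol(\asn)>0$) produces a specific $(\eta,\tau)$ for which $\pnt^\asn\leq 2(C_A+C_B)\sqrt{\lambda_1}=(2-2/k+1/(2\sin(\pi/k)))\sqrt{2\lambda_1}$, matching the theorem. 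For the stated $O(m+n\log n)$ runtime, sort vertices by $r_u^2$ in $O(n\log n)$ and sweep through the $O(n)$ parameter-change events (threshold crossings and sector-boundary crossings) while maintaining the edge-penalty sum incrementally, touching each edge only $O(1)$ times overall.

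The main obstacle is the per-edge analysis in the borderline regime $|\mu|\ll 1$ with $r_u\approx r_v$: a deterministic nearest-sector rule can violate such an edge even though $e\approx|r_u-r_v|$ is tiny, so no pointwise inequality $\min(r_u^2,r_v^2)\leq C\cdot e\sqrt{r_u^2+r_v^2}$ can hold. Randomising the angular shift $\eta$ is precisely what smooths this out, replacing the hard factor $1$ by the soft factor $|\mu|k/(2\pi)$, which vanishes exactly where $e$ does and restores a clean application of Cauchy-Schwarz.
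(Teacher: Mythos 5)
Your proposal is correct and follows essentially the same route as the paper: the same roots-of-unity test vector for the easy direction, and for the hard direction the same two-parameter rounding (squared-uniform radial threshold plus uniform angular shift into $k$ sectors), the same edgewise expectation bound (your chord identity $e^2=(r_u-r_v)^2+4r_ur_v\sin^2(\mu/2)$ plays the role of the paper's Fact on angles versus distances), followed by the identical Cauchy--Schwarz, averaging, and sweep-based $O(m+n\log n)$ implementation. The constants match exactly, so there is nothing substantive to add.
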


Our analysis is based on the following fact about the relations about the angle between two vectors and their Euclidean distance. For some $a,b\in \CCC$, we write $\theta(a,b)\in[-\pi,\pi)$ to denote the angle from $b$ to $a$, i.e.,  $\theta(a,b)$ is the unique real number in $[-\pi,\pi)$ such that
\[
    \frac{a}{\norm{a}} = \frac{b}{\norm{b}} \exp\left(i \theta(a,b)\right).
\]

\begin{fact}\label{fact:arc2}
   Let $a,b$ be complex numbers
   such that $\theta = \theta(a,b)$. The following statements hold: 
   \begin{enumerate}
       \item If $\theta\in \left[-\frac{2\pi}{k}, \frac{2\pi}{k} \right)$, then it holds that 
    \begin{align}\label{eq:arc2}
        \sizeof{\theta} \cdot \min\setof{\norm{a},\norm{b}} \leq \frac{\pi}{k\cdot\sin  (\pi/k)}\cdot \norm{a - b}.
    \end{align}
\item If $\theta  \in \left[-\pi, -\frac{2\pi}{k}\right) \union  \left[\frac{2\pi}{k}, \pi\right)$, then  it holds that 
    \begin{align}\label{eq:arc3}
        \min\setof{\norm{a},\norm{b}} \leq
        \frac{1}{2\cdot \sin(\pi/k) }\cdot \norm{a - b}.
    \end{align}
    \end{enumerate}
\end{fact}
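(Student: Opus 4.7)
The plan is to reduce Fact~\ref{fact:arc2} to a single chord-length estimate and then dispatch the two cases via elementary properties of $\sin$. Without loss of generality take $r := \|a\| \le \|b\| =: R$ (the statement is symmetric in $a$ and $b$) and rotate so that $b$ lies on the positive real axis, giving $a = r e^{i\theta}$. The law of cosines then reads $\|a-b\|^2 = r^2 + R^2 - 2rR\cos\theta$.

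The central step I would establish first is the chord bound
\begin{equation*}
    \|a-b\| \;\ge\; 2r\,|\sin(\theta/2)|,
\end{equation*}
which I would verify through the algebraic identity
\begin{equation*}
    \|a-b\|^2 - 4r^2\sin^2(\theta/2) \;=\; (R-r)\bigl(R + r(1-2\cos\theta)\bigr),
\end{equation*}
whose right-hand side is manifestly nonnegative since $R \ge r$ and the second factor is at least $2r(1-\cos\theta)\ge 0$. Geometrically, this says that among points at distance at least $r$ from the origin forming angle $\theta$ with $a$, the closest one to $a$ lies on the same circle as $a$, where the standard chord formula $2r|\sin(\theta/2)|$ applies.

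With the chord bound in hand, both inequalities reduce to one-variable trigonometric estimates on $|\theta/2| \in [0,\pi/2]$ (recall $\theta \in [-\pi,\pi)$), an interval on which $\sin$ is nonnegative and increasing. For Case~2, the hypothesis $|\theta|\ge 2\pi/k$ gives $|\theta/2|\ge \pi/k$, so monotonicity yields $|\sin(\theta/2)| \ge \sin(\pi/k)$, which combined with the chord bound is exactly~\eqref{eq:arc3}. For Case~1, $|\theta/2| \in [0,\pi/k]$ and concavity of $\sin$ on this interval, with boundary values $0$ and $\sin(\pi/k)$, gives $\sin(x) \ge \tfrac{k\sin(\pi/k)}{\pi}\, x$; applying this at $x = |\theta|/2$ and combining with the chord bound yields~\eqref{eq:arc2}. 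The only genuinely asymmetric input is the chord inequality itself, where $R \ge r$ is used crucially; the remainder is routine trigonometric bookkeeping.
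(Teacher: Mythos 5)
Your proof is correct and follows essentially the same route as the paper: both arguments reduce to the chord bound $\norm{a-b}\geq 2\min\setof{\norm{a},\norm{b}}\cdot\sin\left(\sizeof{\theta}/2\right)$ and then finish Case~1 with the linear lower bound $\sin x\geq \frac{k\sin(\pi/k)}{\pi}x$ on $[0,\pi/k]$ and Case~2 with monotonicity of $\sin$ on $[0,\pi/2]$. The only difference is cosmetic: the paper obtains the chord bound by radially projecting the longer vector onto the circle of radius $\min\setof{\norm{a},\norm{b}}$ and asserting $\norm{a-b}\geq\norm{a'-b'}$, whereas you verify that same inequality explicitly via the law-of-cosines identity, which is a nice touch but not a different method.
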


\begin{proof}
    We assume $\theta\in \left[-\frac{2\pi}{k}, \frac{2\pi}{k} \right)$ and prove the first statement. 
Let 
\[
a' = \frac{a}{ \| a\|}\cdot\min \{ \|a\|, \|b\|\},
\]
\[
b' = \frac{b}{ \| b\|}\cdot\min \{ \|a\|, \|b\|\}.
\]
Then we have that 
     \begin{align}
         \norm{a - b} & \geq \|a'-b' \| \notag \\
         & \geq 2\cdot \sin\left(\frac{\sizeof{\theta}}{2}\right) \cdot \min\setof{\norm{a},\norm{b}} \notag \\
         & \geq 2\cdot \frac{\sizeof{\theta}}{2}\cdot \frac{\sin(\pi/k) }{\pi/k} \cdot \min\setof{\norm{a},\norm{b}} \notag \\
         & =    \sizeof{\theta}\cdot \frac{k \sin(\pi/k) }{\pi} \cdot \min\setof{\norm{a},\norm{b}},\nonumber
    \end{align}
    where the last inequality follows by the fact that
    for any $\alpha \in \left[0, \frac{\pi}{2}\right]$
    and $x \in [0, \alpha]$ it holds that
    $\sin x \geq x\cdot \frac{\sin \alpha}{\alpha}$.
    Multiplying $\frac{\pi}{k \sin\frac{\pi}{k}}$ on the both sides of the inequality above gives us (\ref{eq:arc2}).
    
    Now we prove the second statement. We have
    \begin{align}
        \norm{a - b} & \geq \norm{a'-b'} \notag \\
        & \geq 2\cdot \sin\left(\frac{\sizeof{\theta}}{2}\right)  \cdot \min\setof{\norm{a},\norm{b}} \notag \\
        & \geq 2\cdot \sin \left( \frac{\pi}{k} \right)\cdot \min\setof{\norm{a},\norm{b}}\nonumber
    \end{align}
    where the last inequality follows from the fact that
    $\theta  \in \left[-\pi, -\frac{2\pi}{k}\right) \union  \left[\frac{2\pi}{k}, \pi\right)$.
    Dividing both sides of the inequality above by $2\cdot \sin(\pi/k) $
    gives us~(\ref{eq:arc3}).
\end{proof}

\begin{proof}[Proof of Theorem~\ref{thm:cheeger_k}]
    We first prove $\frac{\lambda_1}{2}\leq \pnt(\ists)$.
    For a partial assignment $\asn: V\to [k]\union\setof{\una}$,
    we construct a vector $x_{\asn} \in \CCC^n$ by
    \begin{align}
        \kh{x_{\asn}}_u = 
        \begin{cases}
            \omega_k^{j} \quad & \asn(u) = j\in[k],\\
            0 & \asn(u) = \una.
        \end{cases}
    \end{align}
    Then, we have
    \begin{align}
        \pnt(\ists) = & \min_{\asn\in\kh{[k]\union\setof{\una}}^V \setminus \setof{\una}^V } \frac{2 \sum_{u\leadsto v} \weit_{uv}\pnt^\asn_{uv}}{\vol(\asn)} \notag \\
        \geq & \min_{\asn\in\kh{[k]\union\setof{\una}}^V\setminus \setof{\una}^V}         \frac{\sum_{u\leadsto v}
        \weit_{uv}\norm{\kh{x_\asn}_u - \omega_k^{\cst_{uv}} \kh{x_\asn}_v}^2}{2 \cdot \vol(\asn)} \notag \\ 
        = & \min_{\asn\in\kh{[k]\union\setof{\una}}^V \setminus \setof{\una}^V} \frac{x_{\asn}^* L_\ists x_{\asn}}
        {2 \cdot x_{\asn}^* D_\ists x_{\asn}} \notag \\
        \geq & \frac{1}{2}\cdot \min_{x\in \CCC^n, x\neq 0^n} \frac{x^* L_\ists x}{x^* D_\ists x} = \frac{\lambda_1(\nL_\ists)}{2},
    \end{align}
    where the second line follows from the fact that
    \begin{align}
        \norm{\kh{x_{\asn}}_u - \omega_k^{\cst_{uv}} \kh{x_{\asn}}_v}^2
        \leq 4 \cdot \pnt^\asn_{uv}
    \end{align}
    always holds for all $(u,v)\in E$,
    and the third line follows from
    Lemma~\ref{lem:qL} and that
    \begin{align}
        x_{\asn}^* D_\ists x_{\asn} =
        \sum_{u\in V} d_u \norm{\kh{x_{\asn}}_u}^2 =
        \sum_{u: \asn(u)\neq \una} d_u = \vol(\asn).
    \end{align}
This proves that $\lambda_1/2 \leq \pnt(\ists)$.

    Secondly,  we assume that $z\in\CCC^n$ is the vector such that
    \[
    \frac{z^* L_\ists z}{z^* D_\ists z} = \lambda_1,
    \]
    and prove the existence of an assignment $\phi$ based on $z$ satisfying 
    \[
    p^{\phi}\leq \left( 2- \frac{2}{k} + \frac{1}{2\sin(\pi/k)} \right) \sqrt{2\lambda_1},
    \]
which will imply (\ref{eq:cheegerd2}) and (\ref{eq:cheegereq4}). We scale each coordinate of $z$ and without loss of generality assume that $\max_{u\in V} \| z_u\| ^2=1$. 
For real numbers $t \geq 0$ and $\eta \in [0, \frac{2\pi}{k})$,
we define $k$ disjoint sets of vertices indexed by $j\in[k]$ as follows:
\begin{align}
    S^{(j)}_{t,\eta} =
    \setof{ u\ \left|\ \norm{z_u} \geq t\ \mathrm{and}\
\theta(z_u, \mathrm{e}^{i \eta }) \in \left[j \cdot \frac{2\pi}{k}, (j+1)\cdot \frac{2\pi}{k}\right) \right.}.
\end{align}
We then define an assignment $\asn_{t,\eta}$ where
\begin{align}
    \asn_{t,\eta}(u) =
    \begin{cases}
        j \quad & \exists j\in [k]: u\in S^{(j)}_{t,\eta}, \\
        \una \quad & \mathrm{otherwise.}
    \end{cases}
\end{align}
By definition, the $k$ vertex sets correspond to the vectors in the $k$ regions of the unit ball
after each vector is rotated by $\eta$ radians counterclockwise.
The role of $t$ is to only consider the coordinates $z_u$ with $\| z_u\|\geq t$.  This is illustrated in Figure~\ref{fig1}.

\definecolor{ashgrey}{rgb}{0.7, 0.75, 0.71}
	\definecolor{burlywood}{rgb}{0.87, 0.72, 0.53}
	\definecolor{cadetblue}{rgb}{0.37, 0.62, 0.63}
\definecolor{carolinablue}{rgb}{0.6, 0.73, 0.89}
		\definecolor{cinereous}{rgb}{0.6, 0.51, 0.48}
		\definecolor{coolblack}{rgb}{0.0, 0.18, 0.39}
	\definecolor{darkcerulean}{rgb}{0.03, 0.27, 0.49}
	\definecolor{dollarbill}{rgb}{0.52, 0.73, 0.4}
		\definecolor{graynew}{rgb}{0.75, 0.75, 0.75}
			\definecolor{grullo}{rgb}{0.66, 0.6, 0.53}
\begin{figure}[t]
\begin{center}
\begin{tikzpicture}[scale=2] 
 
 \coordinate (A) at (1.149,0.954);
 \coordinate (B) at (-1.410,0.513);
 \coordinate (C) at (0.260, -1.478);
 
 \draw[draw=white] (B) to[arc through ccw=(A)] (C) -- (arc through center) -- cycle;
\draw[draw=white] (A) to[arc through ccw=(C)] (B) -- (arc through center) -- cycle;
\draw[draw=burlywood] (C) to[arc through ccw=(B)] (A) -- (arc through center) -- cycle;

 \draw[ultra thick,draw=white] (0,0) -- (A);
 \draw[ultra thick,draw=white] (0,0) -- (C);
 
 \draw[thick,dashed] (0,0) -- (A);
 \draw[thick,dashed] (0,0) -- (B);
 \draw[thick,dashed] (0,0) -- (C);
  
  \draw[very thick, ->, draw=dollarbill,rotate=40] (1.5,0) arc(0:120:1.5cm and 1.5cm);
  \draw[ultra thick, ->, draw=darkcerulean, rotate=160 ] (1.5,0) arc(0:120:1.5cm and 1.5cm);
  \draw[ultra thick, ->, draw=orange, rotate=280 ] (1.5,0) arc(0:120:1.5cm and 1.5cm);

  \draw (0.6, 1.6) node[right] {{Set} $S^{(1)}_{t,\eta}$};
  
  \draw (-2.2, -0.85) node[right] {{Set} $S^{(2)}_{t,\eta}$};
  
  \draw (1, -1.3) node[right] {{Set} $S^{(3)}_{t,\eta}$};

  \draw[very thick, ->, draw=red, rotate=0] (0.7,0) arc(0:40:0.7cm and 0.7cm);
  
  \draw [fill=white] (0.82,0.1) rectangle (2.6,0.55);
  
  \draw (0.9, 0.45) node[right] {Random rotation};
  \draw (0.9, 0.23) node[right] {by $\eta\in[0, 2\pi/k)$};

 \draw[very thick, draw=black, ->] (-1.8,0) -- (1.8,0) coordinate (x axis);
 \draw[very thick,draw=black, ->] (0,-1.8) -- (0,1.82) coordinate (y axis);
 
 \draw[fill=cinereous!10] (0,0) circle (0.5cm);
 
 \draw[fill=grullo] (0.4,0) circle (0.02cm);

 \draw[fill=grullo] (0.8,0.8) circle (0.02cm);
 
 \draw[fill=grullo] (0.76,0.2) circle (0.02cm);
 
 \draw[fill=grullo] (-0.36,-0.2) circle (0.02cm);
 
 \draw[fill=grullo] (-0.36,-0.5) circle (0.02cm);

 \draw[fill=grullo] (-0.76,-1) circle (0.02cm);
 
 \draw[fill=grullo] (-0.66,-0.7) circle (0.02cm);
 
 \draw[fill=grullo] (-0.63534,-0.62) circle (0.02cm);
 
 \draw[fill=grullo] (-0.72,-0.734) circle (0.02cm);
 
 \draw[fill=grullo] (-0.7342,-0.6) circle (0.02cm);

 \draw[fill=grullo] (0.8,-0.45) circle (0.02cm);

 \draw[fill=grullo] (0.345,-0.3) circle (0.02cm);

 \draw[fill=grullo] (0.81,-0.7) circle (0.02cm);
 
 \draw[fill=grullo] (0.79,-0.64) circle (0.02cm);
 
 \draw[fill=grullo] (0.72,-1.2) circle (0.02cm);
 
 \draw[fill=grullo] (0.9,-1.1) circle (0.02cm);

 \draw[fill=grullo] (0.8,-0.81) circle (0.02cm);

 \draw[fill=grullo] (0.9,-0.7) circle (0.02cm);

 \draw[fill=grullo] (-0.6,0.85) circle (0.02cm);

 \draw[fill=grullo] (-0.8,0.75) circle (0.02cm);
 \draw[fill=grullo] (-0.76,0.72) circle (0.02cm);
 \draw[fill=grullo] (-0.7,0.85) circle (0.02cm);
 \draw[fill=grullo] (-0.5,1) circle (0.02cm);
 \draw[fill=grullo] (-0.8,0.9) circle (0.02cm);
  \draw[fill=grullo] (-1.1,0.6) circle (0.02cm);
 \draw[fill=grullo] (-1.2,0.75) circle (0.02cm);

 \draw[fill=grullo] (-0.3,0.25) circle (0.02cm);

 \draw[fill=grullo] (0.12,0.1) circle (0.02cm);

\draw[ultra thick,draw=black, ] (-0.07,0.5) -- (0.07,0.5) coordinate (y axis);

 \draw (-0.3, 0.6) node[right] {\textbf{\small $t$}};

  \end{tikzpicture}
 
\end{center}
\caption{Illustration of the proof for Theorem~\ref{thm:cheeger_k} for the case of $k=3$. The gray circle is obtained by sweeping  $t\in[0,1]$, and the  red arrow represents a random angle $\eta\in[0, 2\pi/k)$. A partial assignment is determined by the values of $\eta$ and $t$.\label{fig1}}
\end{figure}
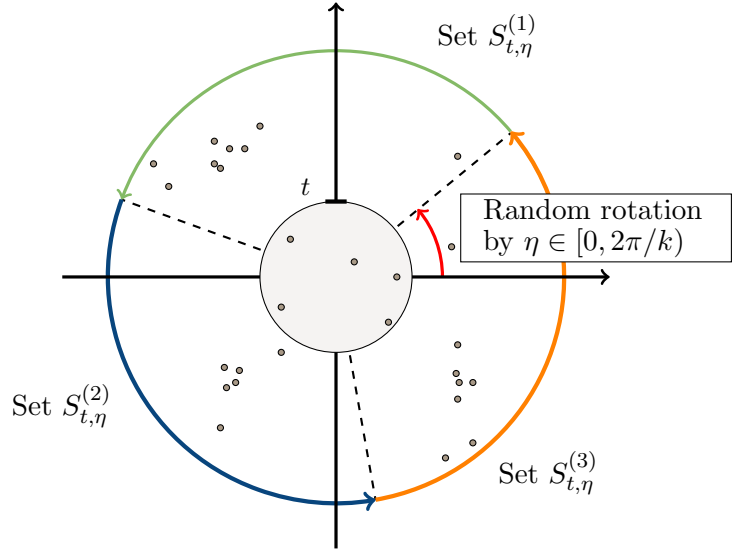

Our goal is to  construct probability distributions for $t$ and $\eta$ such that 
\begin{align}\label{eq:cheegermainobj4}
    \frac{\expec{t,\eta}{2\sum_{u\leadsto v} \weit_{uv} \pnt^\asn_{uv} }}{\expec{t,\eta}{\vol(\asn_{t,\eta})}}
        \leq
        \kh{2 - \frac{2}{k} +  \frac{1}{2\sin (\pi/k) }} \cdot \sqrt{2\cdot \frac{z^* L_\ists z}{z^* D_\ists z}}.
    \end{align}
This implies  by linearity of expectation that
\begin{align}
       \expec{t,\eta}{2\sum_{u\leadsto v} \weit_{uv} \pnt^\asn_{uv}  -
       \kh{2 - \frac{2}{k} +  \frac{1}{2\sin (\pi/k) }}\cdot \vol(\asn_{t,\eta})\cdot
   \sqrt{2\cdot \frac{z^* L_\ists z}{z^* D_\ists z}}} 
        \leq
        0,
    \end{align}
and existence of an assignment $\asn$ satisfying (\ref{eq:cheegereq4}).

Now let us assume that $t\in[0,1]$ is chosen such that $t^2$ follows from a uniform distribution over $[0,1]$, and $\eta$ is chosen uniformly at random from $[0, 2 \pi/k)$. We analyse the numerator and denominator in the left-hand side of (\ref{eq:cheegermainobj4}). 
For the denominator, it holds that 
\begin{align}\label{eq:x2}
    \expec{t,\eta}{\vol(\asn_{t,\eta})}  & =
    \sum_{u\in V} d_u\cdot \prob{}{  \asn(u) \neq \una} = \sum_{u\in V} d_u\cdot \prob{}{ \norm{z_u}\geq t} \notag \\
    & = \sum_{u\in V} d_u \norm{z_u}^2 = z^* D_\ists z.
\end{align}
For the numerator,
it holds by linearity of expectation that 
    \begin{align}
        \expec{t,\eta}{2\sum_{u\leadsto v} \weit_{uv} \pnt^\asn_{uv} }
        = 2 \sum_{u\leadsto v} \weit_{uv}\, \expec{t,\eta}{\pnt^\asn_{uv}}.
    \end{align}
    Then we look at $\expec{t,\eta}{\pnt^\asn_{uv}}$ for every edge $(u,v)\in E$.
    The analysis is based on the value of $\theta=\theta(z_u,  \omega_k^{\cst_{uv}} z_v)$,
    the angle from $z_v$ rotated by $2\cst_{uv}\pi/k$ radians clockwise
    to $z_u$.
\begin{itemize}
\item Case~1: 
    $\theta = \theta(z_u, \omega_k^{\cst_{uv}}z_v) \in \left[-\frac{2\pi}{k}, \frac{2\pi}{k}\right)$. It holds that 
    \begin{align}
        \expec{t,\eta}{\pnt^\asn_{uv}}
        =  & \kh{1 - \frac{1}{k}} \cdot \prob{}{ \| z_u\| < t\leq \|z_v\| \mbox{\ or\ } \| z_v\| < t\leq \|z_u\|} \notag \\
         & \qquad \qquad +1\cdot \prob{}{\norm{z_u}\geq t, \norm{z_v}\geq t, \asn(u) -\asn(v) \not\equiv \cst_{uv}\hmod k} \notag \\
        =  & \kh{1 - \frac{1}{k}} \sizeof{\norm{z_u}^2 - \norm{z_v}^2} +
        \frac{\sizeof{\theta}}{ 2\pi/k} \cdot \min\setof{\norm{z_u}^2, \norm{z_v}^2} \notag \\
        \leq & \kh{1 - \frac{1}{k}}\sizeof{\norm{z_u}^2 - \norm{z_v}^2} +
        \frac{{\pi}/\kh{k \sin(\pi/k)}}{2\pi/ k}\cdot \norm{z_u - \omega_k^{\cst_{uv}} z_v}
            \cdot \min\setof{\norm{z_u}, \norm{z_v}} \notag \\
            \leq & \kh{1 - \frac{1}{k}} \sizeof{\norm{z_u}^2 - \norm{z_v}^2} +
            \frac{1}{4\cdot\sin (\pi/k) } \cdot \norm{z_u - \omega_k^{\cst_{uv}} z_v} \cdot \kh{\norm{z_u} + \norm{z_v}}
        \nonumber \\
        \leq  & \kh{1 - \frac{1}{k} + \frac{1}{4\cdot \sin (\pi/k) }}
        \norm{z_u - \omega_k^{\cst_{uv}} z_v} \kh{\norm{z_u} + \norm{z_v} } \nonumber,
    \end{align}
where the second equality follows from that
    \begin{align}
        \lefteqn{\prob{}{\norm{z_u}\geq t, \norm{z_v}\geq t, \asn(u) -\asn(v) \not\equiv \cst_{uv}\hmod k}} \notag \\
        =  & \prob{}{\norm{z_u}\geq t, \norm{z_v}\geq t} \cdot 
    \prob{}{\asn(u) -\asn(v) \not\equiv \cst_{uv}\hmod k\ |\ \norm{z_u}\geq t, \norm{z_v}\geq t} \notag \\
    = & \min\setof{\norm{z_u}^2, \norm{z_v}^2}\cdot\frac{\sizeof{\theta}}{2\pi / k}, \notag
    \end{align}
    the third inequality follows by Fact~\ref{fact:arc2} and
    that $\sizeof{\theta}$ equals exactly the angle between $z_u$ and $\omega_k^{\cst_{uv}} z_v$.
         
%\item Case~2:  $\theta = \theta(z_u, z_v) \in \left[\frac{2\pi}{k}, \frac{4\pi}{k}\right)$.
%    We apply the same inequalities as in Case~1 and have 
%    \begin{align}
%        & \expec{t,\eta}{C_{S_{t,\eta}}\kh{(u,v)}} \notag \\
%        = & 2\cdot \frac{k-1}{k}\cdot \prob{}{\text{exactly one of $y_u, y_v$ is zero}}
%        +   2\cdot \prob{}{\exists l\neq (j+1)\hmod k,\ (u,v)\in E(S_j,S_l)} \notag \\
%        = & 2\cdot \frac{k-1}{k}\cdot \sizeof{\norm{z_u}^2 - \norm{z_v}^2} +
%        2\cdot \frac{\theta - \frac{2\pi}{k}}{\frac{2\pi}{k}} \cdot \min\setof{\norm{z_u}^2, \norm{z_v}^2} \notag \\
%        \leq & 2\cdot \frac{k-1}{k}\cdot \sizeof{\norm{z_u}^2 - \norm{z_v}^2} +
%        2\cdot \frac{{\pi}/\kh{k \sin \frac{\pi}{k}}}{2\pi/ k}\cdot \norm{z_u - \omega_k z_v}
%            \cdot \min\setof{\norm{z_u}, \norm{z_v}} \notag \\
%            \leq & 2\cdot \frac{k-1}{k} \cdot \sizeof{\norm{z_u}^2 - \norm{z_v}^2} +
%            \frac{1}{2\sin \frac{\pi}{k}} \cdot \norm{z_u - \omega_k z_v} \cdot \kh{\norm{z_u} + \norm{z_v}}
%        \nonumber \\
%        \leq & \kh{2\cdot \frac{k-1}{k} + \frac{1}{2\sin \frac{\pi}{k}}}
%        \norm{z_u - \omega_k z_v} \kh{\norm{z_u} + \norm{z_v} } \nonumber.
%    \end{align}
\item Case~2:     $\theta = \theta(z_u, \omega_k^{\cst_{uv}}z_v) \in \left[-\pi, -\frac{2\pi}{k}\right)\union \left[\frac{2\pi}{k},\pi\right)$.
    It holds that
    %We apply the same inequalities as before and have 
    \begin{align}
        \expec{t,\eta}{\pnt^\asn_{uv}}
        =  & \kh{1 - \frac{1}{k}} \cdot \prob{}{ \| z_u\| < t\leq \|z_v\| \mbox{\ or\ } \| z_v\| < t\leq \|z_u\|} \notag\\
         & \qquad\qquad +1\cdot \prob{}{\norm{z_u}\geq t, \norm{z_v}\geq t, \asn(u) -\asn(v) \not\equiv \cst_{uv}\hmod k} \notag \\
        = & \kh{1 - \frac{1}{k}}\cdot \sizeof{\norm{z_u}^2 - \norm{z_v}^2} +
        1\cdot \min\setof{\norm{z_u}^2, \norm{z_v}^2} \notag \\
        \leq & \kh{1 - \frac{1}{k}}\cdot \sizeof{\norm{z_u}^2 - \norm{z_v}^2} +
        \frac{1}{2\sin(\pi/k) }\cdot \norm{z_u - \omega_k^{\cst_{uv}} z_v}
            \cdot \min\setof{\norm{z_u}, \norm{z_v}} \notag \\
            \leq & \kh{1 - \frac{1}{k}}\cdot \sizeof{\norm{z_u}^2 - \norm{z_v}^2} +
            \frac{1}{4\sin (\pi/k) } \cdot \norm{z_u - \omega_k^{\cst_{uv}} z_v} \cdot \kh{\norm{z_u} + \norm{z_v}}
        \nonumber \\
        \leq & \kh{1 - \frac{1}{k} + \frac{1}{4\cdot\sin(\pi/k) }}
        \norm{z_u - \omega_k^{\cst_{uv}} z_v} \kh{\norm{z_u} + \norm{z_v} } \nonumber,
    \end{align}
    where
    the second equality follows from the fact that
    edge $(u,v)$ can not be satisfied when $\theta$ is in this range,
    the first inequality follows by Fact~\ref{fact:arc2}
    and that the angle between $z_u$ and $\omega_k^{\cst_{uv}}z_v$ is
    at least $\frac{2\pi}{k}$,
    and the last line follows by the triangle inequality.     
\end{itemize}

Combining these two cases gives us that 
    \begin{align}
        \expec{t,\eta} {2\sum_{u\leadsto v} \weit_{uv}\, {\pnt^\asn_{uv}}  }
        &\leq \kh{2 - \frac{2}{k} + \frac{1}{2\sin(\pi/k) }}
        \sum_{u\leadsto v} \weit_{uv} \norm{z_u - \omega_k^{\cst_{uv}} z_v}\kh{\norm{z_u} + \norm{z_v} } \notag \\
        %&\leq \kh{2 - \frac{2}{k} + \frac{1}{2\sin(\pi/k) }}
        %\sum_{u\leadsto v} \norm{z_u - \omega_k z_v} \kh{\norm{z_u} + \norm{z_v} } \notag \\
        &\leq \kh{2 - \frac{2}{k} + \frac{1}{2\sin(\pi/k) }}
        \sqrt{\sum_{u\leadsto v} \weit_{uv} \norm{z_u - \omega_k^{\cst_{uv}} z_v}^2}
        \sqrt{\sum_{u\leadsto v} \weit_{uv} \kh{\norm{z_u} + \norm{z_v} }^2}\notag \\
        &\leq \kh{2 - \frac{2}{k} + \frac{1}{2\sin(\pi/k) }}
        \sqrt{\sum_{u\leadsto v} \weit_{uv} \norm{z_u - \omega_k^{\cst_{uv}} z_v}^2}
        \sqrt{2\sum_{u} d_u \norm{z_u}^2 } \notag \\
        & = \kh{2 - \frac{2}{k} + \frac{1}{2\sin(\pi/k) }}
        \cdot \sqrt{z^* L_\ists z} \cdot \sqrt{2 z^* D_\ists z},
    \end{align}
  where
  %the second inequality %follows by the fact that
  %$\sin x \geq (2/\pi)\cdot x$ holds for any $x\in \left[0,\pi/2\right)$, %and
  the second inequality follows
  by the Cauchy-Schwarz inequality.
  Combining this with (\ref{eq:x2}) finishes the proof of the inequality (\ref{eq:cheegerd2}).

  Finally, let us look at the time needed to
  find the desired partial assignment. Notice that, by the law of total expectation, we can write   \begin{align}
      & \expec{t,\eta}{2\sum_{u\leadsto v} \weit_{uv} \pnt^\asn_{uv}  -
       \kh{2 - \frac{2}{k} +  \frac{1}{2\sin (\pi/k) }}\cdot \vol(\asn_{t,\eta})\cdot
   \sqrt{2\cdot \frac{z^* L_\ists z}{z^* D_\ists z}}} \notag\\
       = & \expec{t_0}{\expec{t,\eta}{\left. 2\sum_{u\leadsto v} \weit_{uv} \pnt^\asn_{uv}  -
       \kh{2 - \frac{2}{k} +  \frac{1}{2\sin (\pi/k) }}\cdot \vol(\asn_{t,\eta})\cdot
   \sqrt{2\cdot \frac{z^* L_\ists z}{z^* D_\ists z}}\ \right|\ t = t_0}}.
  \end{align}
   As a preparation step, we build 
    two ordered sequences of coordinates of $\{ z_u\}$:
  the first ordered sequence is based on $z_u$'s norm, and the other ordered sequence is based on $z_u$'s angle. This step takes $O(n\log n)$ time.
Now we construct two sequences of sweep sets:  the first is based on  $t$, and the second is based on $\eta$. For constructing the sweep sets based on $t$, the algorithm increases $t$ from $0$ to $1$,
    and updates the conditional expectation of the edges incident with $u$ whenever $t$ exceeds $\norm{z_u}$. Notice that each edge $(u,v)$ will be updated at most twice, i.e., in the step when $t$ reaches $\norm{z_u}$  and when it reaches $\norm{z_v}$. 
    Hence, the total runtime for constructing the sweep sets  on $t$ is $O(m)$. The runtime analysis for constructing the sweep sets  on $\eta$ is similar: the algorithm increases $\eta$ from $0$ to $2\pi/k$, and updates the penalties $\pnt^\asn_{uv}$ of the edges $(u,v)$ only if the assignment of $u$ of $v$ changes.  Since every edge will be updated at most twice, the total runtime for constructing the sweep sets on $\eta$ is $O(m)$ as well. The algorithm terminates if the 
        assignment $\asn$ satisfying~(\ref{eq:cheegereq4}) is found. The total runtime of the algorithm is $O(m+n\log n)$.
         %Therefore, we can find in $O(m)$ time $k$ disjoint sets $S_0,\ldots,S_{k-1}$
  %satisfying~(\ref{eq:cheegereq4}).
  \end{proof}

\begin{remark}
    We remark that the factors $\lambda_1/2$ and $\sqrt{\lambda_1}$ in
    Theorem~\ref{thm:cheeger_k} are both tight
    within constant factors.
    The tightness can be derived directly from Section 5 of~\cite{Trevisan09},
    since when $k=2$, our inequality is the same as the one in~\cite{Trevisan09} up to constant factors.

    We also remark that the factor of $k$ in Theorem~\ref{thm:cheeger_k} is necessary, which is shown by the following instance: the linear system has $nk$ variables where every variable belongs to one of $k$ sets $S_0,\dots,S_{k-1}$ with $|S_i| = n$ for any $0 \le i \le k-1$. Now, for any $i$, we add $n$ equations of the form $x_u - x_v = 1 \mod k$ with $x_u \in S_i$, $x_v \in S_j$, and $j = (i+1) \mod k$, and $n$ equations of the form $x_u - x_v = 1 \mod k$ with $x_u \in S_i$, $x_v \in S_j$, and $j = (i+2) \mod k$. This instance is constructed such that the underlying graph is regular, and every assignment could only satisfy at most half of the equations, implying that the imperfectness is $p(\ists) = \Omega(1)$. However, mapping each variable in $S_i$ to the root of unity $\omega_k^i$, it's easy to see that $\lambda_1(\calL_{\ists}) = O(1/k^2)$. Hence Theorem~\ref{thm:cheeger_k} is tight with respect to $k$.  
\end{remark}

\begin{remark}
  We  notice that  this factor of $k$ originates from the relation between the quadratic forms of the Hermitian Laplacian and the penalty function $p$ of Definition~\ref{def:pnt}. Indeed, we could re-define our penalty function such that, for an equation of the form $x_u - x_v = c \mod k$ and assignment  $\phi(u) -\phi(v) = d \neq c \mod k$, the value of the penalty to this equation with respect to $\phi$ is proportional to $\min\setof{|c-d|,k-|c-d|}$, i.e., the distance between $c$ and $d$. Based on this new penalty function, we could obtain the same Cheeger inequality without any dependency on $k$. However, with this new penalty function we would end up solving a different version of the original \maxlin\ problem. 
\end{remark}

Finally, we  compare the proof techniques of Theorem~\ref{thm:cheeger_k} with other Cheeger-type inequalities in the literature: first of all, most of the Cheeger-type inequalities~(e.g.,~\cite{Alon86,Trevisan09,journals/jacm/LeeGT14,KwokLLGT13}) consider the case where every eigenvector is in $\RRR^n$ and are only applicable for undirected graphs, while for our problem the graph $G$ associated with $\ists$ is directed and eigenvectors of $\mathcal{L}_{\ists}$ are in $\CCC^n$. Therefore, constructing sweep sets in $\CCC$ is needed, which is more involved than proving similar Cheeger-type inequalities~(e.g.,~\cite{Alon86, Trevisan09}). Secondly, 
by dividing the complex unit ball into $k$ regions, we 
are able to show that a partial assignment corresponding to $k$ disjoint subsets can be found using a single eigenvector. This is quite different from the  
techniques used for finding $k$ vertex-disjoint subsets of low conductance in an undirected graph, where $k$ eigenvectors are usually needed~(e.g.~\cite{journals/jacm/LeeGT14,KwokLLGT13, PSZ17}).

It is also worth mentioning that a Cheeger-like inequality was shown in \cite{BandeiraSS13} for a synchronisation problem which has some connections to \maxlin.~Their analysis, however, cannot be adapted in our setting.
We also remark that, while sweeping through values of $t$ is needed to obtain \emph{any} guarantee on the penalty of the partial assignment computed, we could in principle just choose a random angle $\eta$: in this way, however, the partial assignment returned would satisfy (\ref{eq:cheegereq4}) only in expectation.

\section{Sparsification for \maxlin \label{sec:sparsification}}
We have seen in Section~\ref{sec:cheeger} that, given any vector in $\CCC^n$ whose quadratic form in $\calL_{\ists}$ is close to $\lambda_1(\calL_\ists)$, we can compute a partial assignment of $\ists$ with bounded approximation guarantee. In Section~\ref{sec:recursion} we will show that  a total assignment can be found by recursively applying this procedure on variables for which an assignment has not yet been fixed. In particular, we will show  that every iteration takes a time nearly-linear in the number of equations of our instance, which can be quadratic in the number of variables. To speed-up each iteration and obtain a time per iteration that is nearly-linear in the number of variables, we need to  sparsify our input instance $\ists$. 

In this section we show that the construction of spectral sparsifiers by 
effective resistance sampling
introduced by Spielman and Srivastava~\cite{SpielmanS11} can be generalised to sparsify \maxlin\ instances. 
In particular,
given an instance $\ists$ of \maxlin\ with $n$ variables and $m$ equations,
we can find in nearly-linear time
a sparsified instance $\jsts$ with $O(nk\log(nk))$ equations
such that for any partial assignment $\asn: V\to [k]\union\setof{\una}$,
the number of unsatisfied equations in $\jsts$ is preserved within a constant factor.
This means that we can apply our algorithm for \maxlin\ to a sparsified instance $\jsts$,
and  any dependency on $m$ in our runtime can be replaced by $nk\log(nk)$.
We remark that we could simply apply uniform sampling to obtain a sparsified instance. However, this would in the end result in an additive error in the fraction of unsatisfied equations, much like in the case of the original Trevisan's result for \textsf{MAX-CUT}~\cite{Trevisan09}. With our construction, instead, we only lose a small multiplicative error. For completeness of discussion, we first recall the definition of a spectral sparsifier.

\begin{definition}
Let $G=(V,E,w)$  be an arbitrary undirected graph with $n$ vertices and $m$ edges. We call a sparse subgraph $H$ of $G$, with proper reweighting of the edges,  a $(1+\delta)$-spectral sparsifier of $G$ if
\[
(1-\delta)x^{\rot}L_Gx\leq x^{\rot} L_{H} x\leq (1+\delta) x^{\rot} L_Gx
\]
holds for any $x\in\RRR^n$,
where $L_G$ and $L_{H}$ are the respective Laplacian matrices of $G$ and $H$. 
\end{definition}

To  construct a sparsified instance  $\jsts$, we introduce label-extended graphs and their Laplacian matrices to characterise the original \maxlin\ instance.
Let $P \in \mathbb{R}^{k\times k}$ be the permutation matrix where 
$P_{ij}=1$ if $i \equiv j + 1 \hmod k$, and $P_{ij}=0$ otherwise.
We  define the adjacency matrix $\AP_{\ists} \in \kh{\mathbb{R}^{k\times k}}^{n\times n}$
for the label-extended graph of  instance $\ists$,
where each entry of $\AP_{\ists}$ is a  matrix in $\mathbb{R}^{k\times k}$ given by
\begin{align}
    (\AP_\ists)_{uv} \defeq
    \begin{cases}
        %\deg(u) & u = v \\
        \weit_{uv}P^{\cst_{uv}} & u\leadsto v, \\
        \weit_{vu}\kh{P^{\rot}}^{\cst_{vu}} & v\leadsto u, \\
        0 & \mathrm{otherwise}.
    \end{cases}
\end{align}
We then define the degree-diagonal matrix $\DP_\ists\in\kh{\mathbb{R}^{k\times k}}^{n\times n}$
by $(\DP_\ists)_{uu} = d_u\cdot I_{k\times k}$,
where $I_{k\times k}$ is the $k\times k$ identity matrix,
and define the Laplacian matrix by 
\begin{align}
    & \LP_{\ists} = \DP_{\ists} - \AP_{\ists}. %\notag \\
    %& \calLP = \DP^{-1/2} \LP \DP^{-1/2} = I - \DP^{-1/2} \AP \DP^{-1/2}.
\end{align}
Notice that the Hermitian Laplacian $L_{\ists}$ is a \emph{compression} of $\LP_{\ists}$, i.e., there exists an orthogonal projection $U$ such that $U^*\LP_{\ists}U = L_{\ists}$.

We further  write 
 $\LP_{\ists} = \DP_{\ists} - \AP_{\ists}$ as a sum of matrices, each one
corresponding to a single equation.
More precisely, for an equation $u_i - v_i \equiv \cst_{i} \hmod k$ with weight $\weit_{u_iv_i}$,
we define a matrix $\bp_{uv}\in\kh{\mathbb{R}^{k\times k}}^{n\times 1}$ by
\begin{align}
    \kh{\bp_{uv}}_{w} =
    \begin{cases}
        I_{k\times k}& w = u, \\
        -\kh{P^{\rot}}^{\cst_{uv}} & w = v, \\
        0 & \mathrm{otherwise.}
    \end{cases}
\end{align}
Then it is easy to verify that
\begin{equation}\label{eq:llll}
    \LP_\ists = \sum_{u\leadsto v} \weit_{uv} \bp_{uv} \bp_{uv}^{\rot},
    \end{equation}
    and it holds for any  $x \in \kh{\mathbb{R}^{k}}^n$ that
\begin{align}\label{eq:qua}
    x^{\rot} \LP_{\ists} x = \sum_{u\leadsto v} \weit_{uv} \norm{x_u - P^{\cst_{uv}} x_v}^2.
\end{align}
For any assignment $\asn : V \to [k]$,
we construct an indicator vector $\xp_\ists\in\kh{\mathbb{R}^k}^n$ by
$
    \kh{\xp_\ists}_u = e_{\asn(u) + 1}
$,
where $e_j \in \mathbb{R}^k$ is the $j$-th standard basis vector. 
Then  it is easy to see that the total weight of unsatisfied equations for $\asn$ is  $(1/2)\cdot \xp_\ists^{\rot} \LP_\ists \xp_\ists$\footnote{
We remark that, if we use the Hermitian Laplacian matrices $L_\ists$ directly instead, this relation only holds up to an $O(k)$ factor. That is why we sparsify the matrix $\LP_{\ists}$ instead. }.

Next we will present an algorithm that produces a sparse \maxlin\ instance $\jsts$ from $\ists$ such that the total weight of unsatisfied equations is preserved\footnote{Notice that we can decide whether there is an assignment satisfying all the equations in $\ists$  by fixing the assignment of an arbitrary vertex and determining assignments for other vertices accordingly, and therefore we only need to consider the case when $\ists$ is unsatisfiable.}. Our algorithm can be described as follows:
first we sample every edge $(u,v)$ in $\ists$ with a certain probability $ p_{uv}$,  and set the weight of every sampled edge $(u,v)$ as its original weight multiplied by $1/ p_{u,v}$. Then, we output an instance $\jsts$ which consists of all the sampled edges.  Notice that this sampling scheme ensures that $\E[\LP_\jsts] = \LP_\ists$,  but we need to choose $p_{uv}$ properly to ensure that (1) $\LP_\jsts$ is sparse,  and (2) $\LP_\jsts$
approximates $\LP_\ists$ with high probability. 
We remark that, while our algorithm and analysis closely follow the one by Spielman and Srivastava~\cite{SpielmanS11}, the requirement of our output is slightly stronger:  in addition to the sparsity constraint for $\jsts$, we need to ensure that the output $\jsts$ is a valid \maxlin\ instance.

To analyse the algorithm, for every edge $(u,v)$ let  $X_{uv}$ be a random matrix defined by 
\begin{align}
    X_{uv} =
    \begin{cases}
        \frac{\weit_{uv}}{\pb_{uv}}\cdot \LP_\ists^{-1/2}\bp_{uv} \bp_{uv}^{\rot} \LP_\ists^{-1/2}&
        \text{with probability $\pb_{uv}$,} \\
        0  & \text{with probability $1 - \pb_{uv}$.}
    \end{cases}
\end{align}
We set the probabilities to
\begin{align}
    \pb_{uv} = \min\setof{1, 10 \left(1/\delta^2\right) \log(nk)  \lev_{uv} }
\end{align}
where $\lev_{uv}$ is defined by
\begin{align}
    \lev_{uv} = \weit_{uv} \trace{\LP_\ists^{-1/2} \bp_{uv} \bp_{uv}^{\rot} \LP_\ists^{-1/2}}.
\end{align}
Notice that by the definition of $\lev_{uv}$ we have that 
\begin{equation}\label{eq:boundluv}
 \sum_{u\leadsto v} \lev_{uv}  = \sum_{u\leadsto v}\weit_{uv}\cdot  \trace{\LP_\ists^{-1/2} \bp_{uv} \bp_{uv}^{\rot} \LP_\ists^{-1/2}} = \sum_{u\leadsto v}\weit_{uv}\cdot  \trace{\LP_\ists^{-1} \bp_{uv} \bp_{uv}^{\rot}}  =nk.
\end{equation}
We also assume without loss of generality that
 $\pb_{uv}< 1$ holds for all edges $(u,v)$. Otherwise,  we split every edge $(u,v)$ with 
 \begin{align}
    \lev_{uv} \geq \frac{1}{10\cdot  (1/\delta^2)\cdot \log(nk)}
\end{align}
 into $K = \ceil{10\cdot  (1/\delta^2)\cdot \log(nk) }$ parallel edges,
each of which has  weight $\weit_{uv} / K$.
By (\ref{eq:boundluv}) there are at most $O\left( (1/\delta^2)\cdot nk\log(nk) \right)$ such edges.

The following 
 matrix Chernoff bound will be used in our analysis.

\begin{lemma}[\cite{Tro12}]\label{lem:chernoff}
    Let $X_1,\ldots,X_m$ be independent random $n$-dimensional symmetric positive semidefinite matrices
    such that
    \begin{itemize}
        \item $\expec{}{X} = I$ where $X = \sum_{i=1}^m X_i$ and
        \item $\norm{X_i} \leq R$ holds for all $i = 1,\ldots,m$.
    \end{itemize}
    Then, for any $0< \delta < 1$,
    \begin{align}
        & \prob{}{\lambdamin\kh{\sum\nolimits_{i=1}^m X_i} \leq (1 - \delta)}
        \leq n \cdot \exp\kh{- \frac{\delta^2}{2R}}, \notag \\
        & \prob{}{\lambdamax\kh{\sum\nolimits_{i=1}^m X_i} \geq (1 + \delta)}
        \leq n \cdot \exp\kh{- \frac{\delta^2}{3R}}.
    \end{align}
\end{lemma}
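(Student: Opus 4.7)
The plan is to follow the matrix Laplace-transform method developed by Ahlswede--Winter and refined by Tropp. Writing $X = \sum_{i=1}^m X_i$, for any $\theta > 0$ a matrix Markov inequality applied to the exponential yields
\[
\prob{}{\lambdamax(X) \geq 1 + \delta}
\;\leq\; \mathrm{e}^{-\theta(1+\delta)} \cdot \expec{}{\trace{\exp(\theta X)}}.
\]
The difficulty is that $\trace{\exp(\theta X)}$ does not factorise across the summands because matrix exponentials are not multiplicative. This is precisely where I would invoke Lieb's concavity theorem, which (via Tropp's master tail inequality) yields the key subadditivity
\[
\expec{}{\trace{\exp(\theta X)}}
\;\leq\; \trace{\exp\kh{\sum\nolimits_{i=1}^m \log \expec{}{\exp(\theta X_i)}}}.
\]
This reduction from a joint matrix MGF to a sum of matrix cumulants is the main conceptual obstacle; everything that follows is a matrix-valued analogue of the scalar Chernoff calculation.

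Next I would bound each matrix cumulant using $0 \pleq X_i \pleq R \cdot I$, which holds since the $X_i$ are positive semidefinite with operator norm at most $R$. The scalar inequality $\mathrm{e}^{\theta x} \leq 1 + x(\mathrm{e}^{\theta R}-1)/R$ on $x \in [0,R]$ lifts to an operator inequality on $X_i$, so taking expectations and applying the operator monotonicity of the logarithm in the form $\log(I + M) \pleq M$ gives
\[
\log \expec{}{\exp(\theta X_i)}
\;\pleq\; \frac{\mathrm{e}^{\theta R} - 1}{R}\cdot \expec{}{X_i}.
\]
Summing and using the hypothesis $\sum_i \expec{}{X_i} = I$ yields $\sum_i \log \expec{}{\exp(\theta X_i)} \pleq \frac{\mathrm{e}^{\theta R} - 1}{R}\cdot I$, so the trace exponential is at most $n\cdot \exp\kh{(\mathrm{e}^{\theta R}-1)/R}$. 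Plugging this back and optimising by choosing $\theta = \log(1+\delta)/R$, then using the standard estimate $(1+\delta)\log(1+\delta) - \delta \geq \delta^2/3$ for $\delta \in (0,1)$, would yield the upper-tail bound with exponent $-\delta^2/(3R)$.

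The lower tail is handled symmetrically: apply the same Laplace transform with $\theta < 0$ to control $\lambdamin(X)$, and use the sharper scalar estimate $(1-\delta)\log(1-\delta) + \delta \geq \delta^2/2$, which accounts for the denominator $2$ in place of $3$ on that branch. As noted, the only genuinely non-elementary ingredient is Lieb's concavity theorem; once that subadditivity of matrix cumulants is accepted, the bound on each $\expec{}{\exp(\theta X_i)}$ and the final scalar optimisation are routine computations inside the trace exponential.
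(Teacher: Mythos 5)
Your proposal is correct: it is precisely the matrix Laplace-transform argument of Tropp (master tail bound via Lieb's concavity theorem, the chord bound $e^{\theta x}\leq 1+x\kh{e^{\theta R}-1}/R$ on $[0,R]$ lifted to operators, operator monotonicity of $\log$, and the scalar optimisation with $(1+\delta)\log(1+\delta)-\delta\geq\delta^2/3$ and $(1-\delta)\log(1-\delta)+\delta\geq\delta^2/2$), which is exactly the source the paper relies on. The paper itself gives no proof of this lemma but cites \cite{Tro12}, so your argument coincides with the intended one and the stated constants $2R$ and $3R$ come out correctly.
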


\begin{theorem}\label{thm:sparsification}
    There is an algorithm that,
    given an unsatisfiable instance $\ists$ of \maxlin\ with
    $n$ variables and $m$ equations
    and parameter $0 < \delta < 1$,
    returns in $\widetilde{O}(mk)$ time 
    an instance $\jsts$ with
    the same set of variables and    $O\left((1/\delta^2)\cdot nk \log(nk)\right)$ equations.
    Furthermore,
     with high probability
     it holds for any vector $x\in\kh{\mathbb{R}^{k}}^n$ that
    \begin{align}\label{eq:gua}
        (1 - \delta) x^{\rot} \LP_\ists x \leq
        x^{\rot} \LP_\jsts x \leq
        (1 + \delta) x^{\rot} \LP_\ists x.
    \end{align}
    %\begin{itemize}
    %    \item the total weights of equations in $\ists$ and $\jsts$
    %        are with $(1 \pm \delta)$, and
    %    \item for any assignment $\asn: V \to [k]$ the following holds:
    %        where $\pnt^\asn(\ists)$ and
    %        $\pnt^\asn(\jsts)$ are penalties defined in Definition~\ref{def:pnt}.
    %\end{itemize}
\end{theorem}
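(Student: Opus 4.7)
The plan is to adapt the effective-resistance sampling framework of Spielman--Srivastava to the label-extended Laplacian $\LP_\ists$, combining it with the matrix Chernoff bound of Lemma~\ref{lem:chernoff}. First I would verify the expectation of $X \defeq \sum_{u \leadsto v} X_{uv}$. Since $\mathbb{E}[X_{uv}] = \weit_{uv}\LP_\ists^{-1/2}\bp_{uv}\bp_{uv}^{\rot}\LP_\ists^{-1/2}$, combining with (\ref{eq:llll}) yields $\mathbb{E}[X] = \LP_\ists^{-1/2}\LP_\ists\LP_\ists^{-1/2} = \Pi$, the orthogonal projector onto $\mathrm{Im}(\LP_\ists)$, where $\LP_\ists^{-1/2}$ is interpreted as the Moore--Penrose pseudoinverse. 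Each $\bp_{uv}$ lies in $\mathrm{Im}(\LP_\ists)$, so every $X_{uv}$ is supported on that subspace, and I may restrict the Chernoff step to $\mathrm{Im}(\LP_\ists)$ and treat $\Pi$ as the identity there.

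Next I would bound the spectral norms. Each nonzero $X_{uv}$ is rank one, so $\norm{X_{uv}} = (\weit_{uv}/\pb_{uv})\,\bp_{uv}^{\rot}\LP_\ists^{-1}\bp_{uv} = \lev_{uv}/\pb_{uv}$. The choice $\pb_{uv} = 10\delta^{-2}\log(nk)\lev_{uv}$ (which can be assumed $<1$ after the edge-splitting pre-processing described in the excerpt) yields $\norm{X_{uv}} \leq R \defeq \delta^2/(10\log(nk))$. Applying Lemma~\ref{lem:chernoff} within $\mathrm{Im}(\LP_\ists)$, whose dimension is at most $nk$, gives $(1-\delta)\Pi \pleq X \pleq (1+\delta)\Pi$ with probability at least $1-2nk\cdot (nk)^{-10/3} = 1-o(1)$. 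Conjugating by $\LP_\ists^{1/2}$ and using $\bp_{uv}\in\mathrm{Im}(\LP_\ists)$ turns this into $(1-\delta)\LP_\ists \pleq \LP_\jsts \pleq (1+\delta)\LP_\ists$ on $\mathrm{Im}(\LP_\ists)$. For $x\in\ker(\LP_\ists)$, the identity $x^{\rot}\LP_\ists x = 0$ and (\ref{eq:qua}) force $\bp_{uv}^{\rot}x = 0$ for every edge, so $x^{\rot}\LP_\jsts x = 0 = x^{\rot}\LP_\ists x$, and (\ref{eq:gua}) extends to all of $(\mathbb{R}^k)^n$.

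For the size bound, the expected number of sampled edges is $\sum \pb_{uv}$, which by (\ref{eq:boundluv}) is at most $10\delta^{-2}\log(nk)\sum_{u\leadsto v}\lev_{uv} = O(\delta^{-2} nk\log(nk))$; adding the $O(\delta^{-2} nk\log(nk))$ parallel copies created by edge-splitting and applying a scalar Chernoff bound on the independent Bernoulli indicators shows $|\jsts| = O(\delta^{-2} nk\log(nk))$ with high probability. Each sampled edge retains the original label $\cst_{uv}$ (with weight rescaled by $1/\pb_{uv}$), so $\jsts$ is a valid \maxlin\ instance.

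The main obstacle is computing the sampling probabilities within the $\Otil(mk)$ budget, since exact $\lev_{uv}$ would require inverting the $(nk)\times(nk)$ matrix $\LP_\ists$. The key observation is that $\LP_\ists$ is precisely the combinatorial Laplacian of the (undirected) label-extended graph of $\ists$: it is SDD and has $O(mk)$ nonzero entries, so nearly-linear-time SDD solvers apply and each linear system in $\LP_\ists$ can be solved in $\Otil(mk)$ time. Following the Johnson--Lindenstrauss trick of~\cite{SpielmanS11}, I would write $\lev_{uv} = \weit_{uv}\norm{\LP_\ists^{-1/2}\bp_{uv}}^2$ and preserve these norms up to $(1\pm 1/4)$-factors by projecting with an $O(\log(nk))\times nk$ JL matrix $Q$ and computing the rows of $Q\LP_\ists^{-1/2}$ via $O(\log(nk))$ SDD solves on $\LP_\ists$, for a total runtime of $\Otil(mk)$. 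Replacing the exact leverage scores by these constant-factor approximations only inflates the sampling probabilities by a constant and does not affect the asymptotic guarantees.
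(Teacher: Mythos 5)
Your proposal follows essentially the same route as the paper: effective-resistance sampling of the label-extended Laplacian $\LP_\ists$, the matrix Chernoff bound of Lemma~\ref{lem:chernoff} with $R=\delta^2/(10\log(nk))$, the bound $\sum_{u\leadsto v}\lev_{uv}\leq nk$ for the instance size, and Johnson--Lindenstrauss plus a nearly-linear-time SDD solver for approximating the $\lev_{uv}$ within the $\Otil(mk)$ budget; your explicit handling of the kernel of $\LP_\ists$ via the projection $\Pi$ is in fact more careful than the paper's statement $\expec{}{\sum X_{uv}}=I$. The only technical slip is in the runtime step: an SDD solver applies (approximately) $\LP_\ists^{-1}$, not $\LP_\ists^{-1/2}$, so one cannot directly "compute the rows of $Q\LP_\ists^{-1/2}$"; instead, as in Spielman--Srivastava and in the paper, one uses the factorisation $\LP_\ists=\fp^{\rot}\fp$ with $\fp$ having rows $\sqrt{\weit_{xy}}\,\bp_{xy}^{\rot}$, writes $\lev_{uv}=\weit_{uv}\norm{\fp\LP_\ists^{-1}\bp_{uv}}_{\mathrm{F}}^2$, and applies the JL projection $Q$ to $\fp$, computing $Q\fp\LP_\ists^{-1}$ with $O(\log(nk))$ solves --- a standard fix that leaves the rest of your argument intact.
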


\begin{proof}[Proof of Theorem~\ref{thm:sparsification}]
We first prove the approximation guarantee~(\ref{eq:gua}). Since it holds that 
    \begin{align}
        \expec{}{\sum_{u\leadsto v} X_{uv}}
        = \sum_{u\leadsto v}\expec{}{X_{uv}}
        = \sum_{u\leadsto v} \weit_{uv}\LP_\ists^{-1/2}\bp_{uv} \bp_{uv}^{\rot} \LP_\ists^{-1/2}
        = I,
    \end{align}and
    \begin{align}
        \norm{X_{uv}} \leq & \frac{\delta^2}{10\log(nk)}
        \cdot
        \frac{\norm{\LP_\ists^{-1/2} \bp_{uv} \bp_{uv}^T \LP_\ists^{-1/2}}}
        {\trace{\LP_\ists^{-1/2} \bp_{uv} \bp_{uv}^T \LP_\ists^{-1/2}}}
        \leq 
        \frac{\delta^2}{10\log(nk)}.
    \end{align}
    By applying Lemma~\ref{lem:chernoff}, the approximation guarantee holds.    
        The number of edges in $\jsts$ follows from (\ref{eq:boundluv}) and Markov's inequality.

It remains to analyse the runtime needed to compute $\lev_{uv}$ for all edges. 
    To this end, we will need the Johnson-Lindenstrauss lemma~\cite{JL84,Ach03}
    and nearly-linear time Laplacian solvers~\cite{ST14,CKM+14}.
    Specifically, we  write $\lev_{uv}$ as
    \begin{align}
        \lev_{uv} = & \weit_{uv}\cdot \trace{\LP_\ists^{-1/2} \bp_{uv} \bp_{uv}^{\rot}\LP_\ists^{-1/2}} \notag \\
        = & \weit_{uv}\cdot  \trace{\bp_{uv}^{\rot} \LP_\ists^{-1} \bp_{uv}} \notag \\
        = & \weit_{uv}\cdot  \trace{\bp_{uv}^{\rot} \LP_\ists^{-1} \kh{\sum_{x\leadsto y} \weit_{xy} \bp_{xy}\bp_{xy}^{\rot}} \LP_\ists^{-1} \bp_{uv}}
        \notag \\
        = & \weit_{uv} \cdot \trace{\bp_{uv}^{\rot} \LP_\ists^{-1} \fp^{\rot} \fp \LP_\ists^{-1} \bp_{uv}} \notag \\
        = & \weit_{uv} \cdot \norm{\fp \LP_\ists^{-1} \bp_{uv}}_\mathrm{F}^2,
    \end{align}
    where the second equality follows from the cyclicality of trace,
    and in the last two lines we write $\fp$ to denote the matrix whose rows
    are $\sqrt{\weit_{xy}}\ \bp_{xy}^{\rot}$.
    Now we generate a matrix $Q$ of size $q\times mk$ with random $\pm 1/\sqrt{q}$ entries,
    where $q = 100 \log(nk)$.
    By the Johnson-Lindenstrauss lemma,
    with high probability it holds for all edges  $(u,v)$ that 
    \begin{align}
        \frac{1}{2} \norm{\fp \LP_\ists^{-1} \bp_{uv}}_\mathrm{F}^2 \leq
        \norm{Q \fp \LP_\ists^{-1} \bp_{uv}}_\mathrm{F}^2 \leq
        \frac{3}{2} \norm{\fp \LP_\ists^{-1} \bp_{uv}}_\mathrm{F}^2.
    \end{align}
    Therefore, our runtime  follows by computing 
    every row of  $Q \fp \LP_\ists^{-1} \bp_{uv}$ 
    by a nearly-linear time Laplacian solver.
\end{proof}

%In this section we will show that, when the input directed graph $G$ is represented by a Hermitian matrix, the notion of spectral sparsifiers introduced by Spielman and Teng can be generalised in the setting of directed graphs, and for any directed graph $G$ with $n$ vertices, a $(1\pm\varepsilon)$-spectral sparsifier  of $G$ with $O(\log n/\varepsilon^2)$ edges can be constructed in nearly-linear time. In addition to speeding up the runtime of  \textcolor{red}{Algorithm~xxx} for directed graph partitioning problem, we believe that the results presented here could be used for solving many other problems for directed graphs. \textcolor{red}{We need to mention and compare with the paper on directed spectral sparisifers somewhere in the paper.}

\section{Algorithm for \maxlin\label{sec:recursion}}

\newcommand{\RecursivePartition}{\textsc{RecursiveConstruct}}

Theorem~\ref{thm:sparsification} tells us  that, given an instance $\ists^*$,  we can find a sparse instance $\ists$ so that the quadratic forms of the corresponding Laplacians $\calL_{\ists^*}$ and $\calL_\ists$ are related by (\ref{eq:gua}). Therefore throughout this section we assume that the input instance $\ists$ for \maxlin\ with $n$ variables has $m = \widetilde{O}\left((1 / \delta^2)\cdot nk\right)$ equations for some parameter $\delta>0$.
Recall that Theorem~\ref{thm:cheeger_k} shows that, for any \maxlin\ instance $\ists$, given an eigenvector for the smallest eigenvalue $\lambda_1(\calL_\ists)$, we can obtain a partial assignment  $\phi$ satisfying 
\begin{align}
\label{eq:cheegerprecise}
        \pnt^\asn \leq
        \kh{2 - \frac{2}{k} + \frac{1}{2\sin (\pi/k) }} \sqrt{2 \lambda_1}.
\end{align}
    Now we show that, by a repeated application of Theorem~\ref{thm:cheeger_k} on the subset of the equations of $\ists$ for which both variables are unassigned, we can obtain a full assignment of $\ists$.  Our algorithm closely follows the one by Trevisan~\cite{Trevisan09} and is described in Algorithm~\ref{algo1}.

 \begin{algorithm}
  \caption{ $\RecursivePartition(\ists,\delta)$\label{algo1}}
  \label{influx}
  \begin{algorithmic}[1]
   \State Compute  vector $z\in \CCC^n$ satisfying
    \begin{align}\label{eq:cond}
        \frac{z^* L_\ists z}{z^* D_\ists z} \leq (1 + 2 \delta) \lambda_1(\calL_\ists);
    \end{align}   
    
   \State Apply the algorithm from Theorem~\ref{thm:cheeger_k} to 
    compute  $\phi: V \to [k]\union\setof{\una}$ such that
    \begin{align}
        \pnt^\asn \leq
        (1 + \delta)
        \kh{2 - \frac{2}{k} + \frac{1}{2\sin (\pi/k) }} \sqrt{2 \lambda_1};
    \end{align}
    \If{$2 \pnt^\asn \geq \kh{1 - 1/k} \vol(\phi)$} 
    \State return random full assignment $\phi': V\to [k]$; \\
    \Comment{the case where the current assignment is worse than a random assignment}
    \ElsIf{$\asn$ \rm{is a full assignment (i.e.} $\asn(V) \subseteq [k]$\rm{)}} 
    \State return $\asn$; \\
    \Comment{The recursion terminates if every variable's assignment is determined}
    \Else
    % \State Generate an integer $r\in [k]$ uniformly at random;
    % \State $\asn_0(x_i) \gets (\asn(x_i) + r)\hmod k$ for all $x_i$\;
     \State $\ists' \gets$ set of equations from   $\ists$ in which both variables' assignments are not determined; 
     \If{$\ists'=\emptyset$} 
     \State set $\phi(u)$ to be an arbitrary assignment if $\phi(u) =\una$ for any $u$;
     \State return $\phi$;
     \Else
     \State $\asn_1 \gets \RecursivePartition(\ists',\delta)$;
     \State return $\asn \union \asn_1$; 
     \EndIf
    \EndIf
    \end{algorithmic}
\end{algorithm}

To achieve the guarantees of (\ref{eq:cheegerprecise}), however, we would need to compute the eigenvector corresponding to $\lambda_1(\calL_\ists)$ \emph{exactly}. To obtain a nearly-linear time algorithm, instead, we relax this requirement and  compute a vector $z$ that well-approximates this eigenvector. In particular, the following lemma shows that, for any $\delta$, we can compute a vector $z \in \mathbb{C}^n$ 
   satisfying (\ref{eq:cond}) in nearly-linear time. 
 
\begin{lemma}
\label{lem:powermethod}
For any given error parameter $\delta$, there is an $\tilde{O}\left( \left(1/\delta^{3}\right) \cdot kn\right)$ time algorithm that returns $z\in\CCC^n$ satisfying (\ref{eq:cond}).
\end{lemma}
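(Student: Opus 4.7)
The plan is to reduce the problem to computing an approximate top eigenvector of the Hermitian matrix $M = 2I - \calL_\ists$. By Lemma~\ref{lem:spectrum} the spectrum of $\calL_\ists$ lies in $[0,2]$, so $M$ is positive semidefinite with top eigenvalue $2 - \lambda_1(\calL_\ists)$ and the same eigenvectors as $\calL_\ists$. Given a unit vector $y \in \CCC^n$ whose Rayleigh quotient with respect to $\calL_\ists$ is at most $(1+2\delta)\lambda_1$, setting $z = D_\ists^{-1/2} y$ immediately satisfies~(\ref{eq:cond}).

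To compute such a $y$, I apply the classical power method to $M$: draw a random complex unit vector $y_0 \in \CCC^n$ (say with i.i.d.\ complex-Gaussian entries, then normalised), and iterate $y_{s+1} = M y_s$ for $t$ steps. Expanding $y_0 = \sum_i \alpha_i v_i$ in the orthonormal eigenbasis of $\calL_\ists$, the final Rayleigh quotient equals
\[
    \frac{y_t^* \calL_\ists y_t}{y_t^* y_t} \;=\; \frac{\sum_i |\alpha_i|^2 (2-\lambda_i)^{2t}\lambda_i}{\sum_i |\alpha_i|^2 (2-\lambda_i)^{2t}}.
\]
A standard anti-concentration argument gives $|\alpha_1|^2 \geq 1/\mathrm{poly}(n)$ with high probability, while for every index $i$ with $\lambda_i > (1+2\delta)\lambda_1$ the factor $\left((2-\lambda_i)/(2-\lambda_1)\right)^{2t}$ decays geometrically at rate $\exp\!\left(-\Theta(\delta\lambda_1 t)\right)$. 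Choosing $t = \widetilde{O}(1/\delta)$ will be enough to drive the combined contribution of these ``bad'' modes below $2\delta\lambda_1$ times the contribution of $v_1$, so the Rayleigh quotient is bounded by $(1+2\delta)\lambda_1$. Each iteration is a single Hermitian matrix-vector product with $M$, costing $O(m)$ time; after invoking Theorem~\ref{thm:sparsification} with accuracy $\delta$ to reduce $m$ to $\widetilde{O}(nk/\delta^2)$, the total cost is $\widetilde{O}(nk/\delta^3)$, which matches the stated bound.

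The main obstacle in executing this cleanly is the regime where $\lambda_1$ itself is much smaller than $\delta$: here the geometric decay rate $\exp(-\Theta(\delta\lambda_1 t))$ is too slow for $t = \widetilde{O}(1/\delta)$ to guarantee a \emph{multiplicative} $(1+2\delta)$ approximation, rather than merely an additive $O(\delta)$ approximation. I would handle this case by first running a short, coarse power iteration to estimate the scale of $\lambda_1$; if $\lambda_1$ is comfortably larger than, say, $\delta^2$, the argument above applies directly, and otherwise $\ists$ is essentially satisfiable and an approximate null vector of $\calL_\ists$ can be recovered via a mildly shifted Hermitian system $(\calL_\ists + \mu I) z = b$ using a conjugate-gradient-type solver, delivering a vector with the required Rayleigh quotient in nearly-linear time. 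Stitching these two branches together adds only polylogarithmic overhead and preserves the $\widetilde{O}(nk/\delta^3)$ runtime.
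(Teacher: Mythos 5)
There is a genuine gap, and it sits exactly where you flagged "the main obstacle": the hard regime of small $\lambda_1$ is precisely the case the lemma must handle (a $(1-\eps)$-satisfiable instance can have $\lambda_1(\calL_\ists)$ as small as it likes), and your two-branch fix does not deliver the claimed $\widetilde{O}(kn/\delta^3)$ bound. First, the forward power-method branch on $M=2I-\calL_\ists$ needs $t=\Theta\bigl(\log n/(\delta\lambda_1)\bigr)$ iterations to kill the modes with $\lambda_i>(1+2\delta)\lambda_1$, so $t=\widetilde{O}(1/\delta)$ suffices only when $\lambda_1=\Omega(1)$ up to polylogs; with your stated threshold $\lambda_1\geq\delta^2$ the branch already costs $\widetilde{O}(nk/\delta^5)$ after sparsification, overshooting the target. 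Second, and more importantly, the fallback branch relies on solving $(\calL_\ists+\mu I)z=b$ with a "conjugate-gradient-type solver" in nearly-linear time, but to preserve a \emph{multiplicative} $(1+2\delta)$ guarantee the shift must satisfy $\mu=O(\lambda_1)$, which makes the condition number of the system roughly $2/\lambda_1$ — unbounded in $n$. Plain CG then needs on the order of $\sqrt{1/\lambda_1}$ matrix-vector products, which is not nearly-linear; so the branch that carries all the hard instances is exactly the step left unproved.

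The paper closes this gap by making condition-number-independence the centrepiece: it runs inverse power iteration directly (about $O((1/\delta)\log(n/\delta))$ iterations), and each iteration solves $\calL_\ists x=b$ with the nearly-linear-time solvers for connection Laplacians of Kyng et al.\ and Kyng--Sachdeva, whose cost is $O\bigl((m+n\log^2 n)\log(1/\delta)\bigr)$ regardless of how small $\lambda_1$ is; combined with the sparsified $m=\widetilde{O}(nk/\delta^2)$ this gives $\widetilde{O}(kn/\delta^3)$. If you want to salvage your write-up, the essential repair is to replace the CG step by an appeal to such a Hermitian/connection-Laplacian solver (a black box you cannot avoid citing here), at which point the forward-power-method branch becomes unnecessary altogether.
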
    

\begin{proof} Following the discussion in 
 \cite[Section 8.2]{vishnoilx}, we  compute a vector $z\in\CCC^n$ satisfying (\ref{eq:cond}) in $O\left((1/\delta)\cdot \log({n/\delta})\right)$ iterations by the power method, where each iteration consists in solving a linear system of the form $\calL_\ists x = b$  for some vector $b$. This can be done up to $\delta$ precision in  $O\left(\left(m + n \log^2n\right)\log(1/\delta)\right)$-time using a nearly-linear time solver for connection Laplacians~\cite{kyng16,KyngS16}. The total running time follows from our assumption on $m$.
\end{proof}

To analyse Algorithm~\ref{algo1}, we introduce some notation. Let $t$ be the number of recursive executions of  Algorithm~\ref{algo1}. For any $1\leq j \leq t+1$, let $\ists_{j}$ be the instance of \maxlin\ in the  $j$-th execution. We indicate with $\rho_j m$ the number of equations in $\ists_j$, where $0\leq \rho_j\leq 1$. Notice that  $\ists_{1} = \ists$ and $\ists_{t+1} = \emptyset$. 
We assume that the maximum number of equations in $\ists_{j}$ that can be satisfied by an assignment is $(1-\varepsilon_j) \rho_j m$, with $\varepsilon=\varepsilon_1$. Also notice that it holds for any $1\leq j\leq t$ that $\eps_j \rho_j m \leq \eps m$,
which implies 
\begin{equation}\label{eq:eq11}
\eps_j \leq \eps / \rho_j.
\end{equation} 
The next theorem presents the performance of our algorithm, whose informal version is Theorem~\ref{thm:main1}

\begin{theorem}\label{thm:algoanalysis}
    Given an instance $\ists$ of \maxlin\ 
    whose optimum is $1 - \eps$
    and a parameter $\delta > 0$,
    the algorithm $\RecursivePartition(\ists,\delta)$
    returns in $\widetilde{O}\left(  \left(1/\delta^{3}\right)\cdot k n^2\right)$ time
    an assignment $\asn$ satisfying at least 
    $1 - 8 \nu \sqrt{\eps}$ fraction of the equations, where
    \[
        \kpa \defeq (1 + \delta)
        \kh{2 - \frac{2}{k} + \frac{1}{2\sin (\pi/k) }} = O(k).
    \]
\end{theorem}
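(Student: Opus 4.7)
The plan is to analyse the recursion level by level and then telescope. First I would establish that at each level $j$ the computed approximate eigenvector yields a partial assignment whose penalty is $O(\nu\sqrt{\eps_j})$: since the optimum full assignment on $\ists_j$ witnesses imperfectness $\pnt(\ists_j) \leq \eps_j$, the lower bound $\lambda_1/2 \leq \pnt(\ists_j)$ from Theorem~\ref{thm:cheeger_k} gives $\lambda_1(\calL_{\ists_j}) \leq 2\eps_j$. The approximate eigenvector $z$ from Lemma~\ref{lem:powermethod} has Rayleigh quotient at most $(1+2\delta)\lambda_1$, so running the sweep of Theorem~\ref{thm:cheeger_k} on $z$ produces $\phi_j$ with $\pnt^{\phi_j} \leq \nu\sqrt{2(1+2\delta)\lambda_1} \leq 2\nu\sqrt{\eps_j}$; the extra $(1+\delta)$ built into $\nu$ absorbs the $\sqrt{1+2\delta}$ slack.

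Next I would translate this penalty into a bound on the weighted number of unsatisfied equations charged to level $j$. Setting $B_j := \vol_{\ists_j}(\phi_j)$, every edge of $\ists_j$ with at least one assigned endpoint contributes at most $2$ to $B_j$, so $B_j \leq 2(m_j - m_{j+1})$. A short case analysis on the cases of the penalty definition shows that the total weighted unsatisfied equations among edges of $\ists_j$ not surviving to $\ists_{j+1}$ is at most a universal constant times $\pnt^{\phi_j} B_j$; the only delicate case is an edge with exactly one assigned endpoint, which is charged $1-1/k$ in the penalty but may actually contribute as much as $1$ to the unsatisfied count, inflating the bound by a factor of $k/(k-1) \leq 2$.

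Summing over levels and using $\eps_j \leq \eps m/m_j$ from~(\ref{eq:eq11}), the total weighted number of unsatisfied equations is $O(\nu)\sum_j \sqrt{\eps_j}(m_j - m_{j+1}) \leq O(\nu)\sqrt{\eps m}\sum_j (m_j - m_{j+1})/\sqrt{m_j}$, and the standard Riemann-sum estimate $\sum_j (m_j - m_{j+1})/\sqrt{m_j} \leq 2\sqrt{m}$ (comparison with $\int_0^m x^{-1/2}\,dx$) yields $O(\nu\sqrt{\eps})\cdot m$; tracking the constants more carefully then gives exactly the claimed $8\nu\sqrt{\eps}\cdot m$ bound.

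The main obstacle I expect to handle is the random-fallback branch. If at some level $j_0$ the algorithm finds $\pnt^{\phi_{j_0}} \geq 1-1/k$ and returns a random full assignment, the expected fraction of unsatisfied equations on $\ists_{j_0}$ is $1-1/k$; combining this with $\pnt^{\phi_{j_0}} \leq 2\nu\sqrt{\eps_{j_0}}$ forces $\eps_{j_0} = \Omega(1/\nu^2)$, so $\rho_{j_0} \leq \eps/\eps_{j_0} = O(\nu^2 \eps)$, and the fallback contributes at most $m_{j_0} \leq O(\nu^2\eps)\cdot m$ extra unsatisfied equations, which is $O(\nu\sqrt{\eps})\cdot m$ in the only regime where the theorem is non-vacuous, namely $\nu\sqrt{\eps} \leq 1$. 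For the runtime, each of the at most $n$ recursive calls spends $\widetilde{O}((1/\delta^3)\cdot kn_j)$ time on the Laplacian solve from Lemma~\ref{lem:powermethod} and $O(m_j + n_j\log n_j)$ on the sweep of Theorem~\ref{thm:cheeger_k}; using $n_j \leq n$ and $m_j \leq m = \widetilde{O}(kn/\delta^2)$, the overall time is $\widetilde{O}((1/\delta^3)\cdot kn^2)$.
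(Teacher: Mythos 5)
Your proposal is correct and follows essentially the same route as the paper's proof: bound $\lambda_1(\calL_{\ists_j})\leq 2\eps_j$ via the easy direction of Theorem~\ref{thm:cheeger_k}, apply the sweep to the $(1+2\delta)$-approximate eigenvector so the $(1+\delta)$ in $\kpa$ absorbs the slack, convert the penalty at each level into unsatisfied weight charged to $\ists_j\setminus\ists_{j+1}$, and telescope with the comparison to $\int_0^1 r^{-1/2}\,\mathrm{d}r$ to get $8\kpa\sqrt{\eps}\,m$, with the runtime coming from Lemma~\ref{lem:powermethod} times a linear recursion depth. The only differences are that you spell out two points the paper leaves implicit — the $k/(k-1)$ accounting for half-assigned edges and the random-fallback branch via $\rho_{j_0}=O(\kpa^2\eps)$ — and both of these are handled correctly.
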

\begin{proof}
    Suppose we are now at the $j$-th iteration.
    By  Theorem~\ref{thm:cheeger_k}, we know that
    the total weight of unsatisfied equations in $\ists_j \setminus \ists_{j+1}$
    is at most
    \begin{align*}
    \lefteqn{2\cdot (\rho_j - \rho_{j+1})  m \left( 2 - \frac{2}{k} + \frac{1}{2\sin(\pi/k)}\right)\cdot \sqrt{2\cdot (1+2\delta) \cdot\lambda_1\left(\mathcal{L}_{\ists_j} \right) }}\\
    & \leq 2\cdot(\rho_j - \rho_{j+1})  m (1+\delta)  \left( 2 - \frac{2}{k} + \frac{1}{2\sin(\pi/k)}\right) \sqrt{2\cdot\lambda_1\left(\mathcal{L}_{\ists_j} \right) } \\
    & = 2 \cdot(\rho_j - \rho_{j+1})  m \kpa \sqrt{2\cdot\lambda_1\left(L_{\ists_j} \right) } \\
    & \leq 4 \cdot(\rho_j - \rho_{j+1})  m \kpa \sqrt{\varepsilon_j }\\
    & \leq 4 \cdot(\rho_j - \rho_{j+1})  m \kpa \sqrt{\varepsilon/\rho_j }\\
    & \leq  4\cdot  m \kpa \sqrt{\varepsilon} \int_{\rho_{j+1}}^{\rho_j}
        \sqrt{\frac{1}{r}} \ \mathrm{d} r,
    \end{align*}
      Therefore, the total weight of unsatisfied equations in $\ists$
    can be upper bounded by 
    \[
         4  m \kpa \sqrt{\varepsilon}\ \sum_{j=1}^{t}
        \int_{\rho_{j+1}}^{\rho_j}
        \sqrt{\frac{1}{r}} \mathrm{d} r
        \leq
         4  m \kpa \sqrt{\varepsilon}
        \int_{0}^{1}
        \sqrt{\frac{1}{r}}\ \mathrm{d} r
        = 8m \kpa \sqrt{\varepsilon},
    \]
    which implies that the total weight of satisfied equations
    is at least $(1 - 8\kpa \sqrt{\eps})m$. The runtime follows by Lemma~\ref{lem:powermethod} and the fact that we perform at most a linear number of recursive iterations.
\end{proof}

The following corollary
which states how much our algorithm beats a random assignment
follows from Theorem~\ref{thm:main1}.

\begin{cor}\label{cor:k3}
    Given a   \maxlin\  instance $\ists$
    whose optimum is $\xi$
    and a constant $\delta > 0$,
    Algorithm~\ref{algo1}
    returns in $\tilde{O}\left( \delta^{-3} n^2\right)$ time
    an assignment $\asn$ satisfying at least 
    $\kh{1/k + \tau}\xi$ fraction of the equations, where
$
        \tau = \Omega\kh{\frac{1}{k^3}}$.
    %if we set $\delta$ to be a constant.
    %Moreover,
    %if we set $\delta = 0.01$, we have that
    %$\tau \geq 0.0271501$ when $k = 2$,
    %$\tau \geq 0.0204635$ when $k = 3$,
    %$\tau \geq 0.0146874$ when $k = 4$,
    %$\tau \geq 0.0108644$ when $k = 5$,
    %$\tau \geq 0.00350364$ when $k = 10$,
    %and $\tau \geq 0.0000150185$ when $k = 100$.
\end{cor}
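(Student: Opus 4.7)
The plan is to combine the Cheeger-based guarantee of Theorem~\ref{thm:algoanalysis} with a trivial deterministic baseline. A uniformly random assignment $V\to[k]$ satisfies each equation with probability exactly $1/k$, so the method of conditional expectations produces in polynomial time a deterministic assignment $\asn_R$ satisfying at least $m/k$ equations. The final algorithm first runs $\RecursivePartition(\ists,\delta)$ to obtain $\asn$, and then returns whichever of $\asn$ and $\asn_R$ satisfies more equations. Its guarantee is therefore at least $\max\{\,1 - 8\nu\sqrt{\eps},\, 1/k\,\}$ fraction of equations, where $\eps = 1 - \xi$ and $\nu = O(k)$ by Theorem~\ref{thm:algoanalysis}.

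Next I would set $\tau = c/k^{3}$ for a small absolute constant $c$ to be chosen, and split into two cases based on $\xi$. If $\xi \leq 1/(1+k\tau)$, then $(1/k + \tau)\xi \leq 1/k$ and the baseline $\asn_R$ alone meets the claimed bound. Otherwise $\eps < k\tau/(1+k\tau) \leq c/k^{2}$, so writing $\nu \leq c_0 k$ we get $8\nu\sqrt{\eps} \leq 8 c_0 k\sqrt{c/k^{2}} = 8 c_0 \sqrt{c}$, which can be driven below any desired constant by shrinking $c$. Since $(1/k + \tau)\xi \leq 1/k + \tau \leq 1/2 + c/8$ for $k \geq 2$, choosing $c$ small enough (concretely, $c \leq 9/(4096\,c_0^{2})$ suffices) forces $1 - 8\nu\sqrt{\eps} \geq 1/k + \tau \geq (1/k + \tau)\xi$, so the output of Algorithm~\ref{algo1} meets the bound.

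The runtime $\widetilde{O}(\delta^{-3} n^2)$ then follows directly from Theorem~\ref{thm:algoanalysis}, treating $k$ as an absolute constant here since the claimed gain $\tau = \Omega(1/k^3)$ already depends only on $k$, plus the polynomial cost of derandomising to obtain $\asn_R$. I do not expect any serious obstacle: the argument is a two-case split dictated by the shape of the $\max$ guarantee, and the only quantitative work is checking that a single constant $c$ makes both cases close simultaneously, which is the elementary calculation above using $\nu = O(k)$ and $1 - 1/k \geq 1/2$ for $k \geq 2$. One minor cleanliness point worth noting is that Algorithm~\ref{algo1} already contains a "return random full assignment" branch; carrying $\asn_R$ explicitly on the side is just a way of upgrading that $1/k$ guarantee from in-expectation to deterministic.
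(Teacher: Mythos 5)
Your proposal is correct and follows essentially the same route as the paper: both rest on the guarantee $\max\{1-8\nu\sqrt{\eps},\,1/k\}$ from Theorem~\ref{thm:algoanalysis} plus the random-assignment baseline, and then an elementary calculation (your threshold split on $\xi$ versus the paper's monotonicity argument at the crossover value $\eps'=(1-1/k)^2/(64\nu^2)$ are the same computation in different clothing). Your explicit derandomisation of the $1/k$ baseline via conditional expectations is a minor tidy-up of a point the paper glosses over, not a different approach.
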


\begin{proof}
    %To prove the approximation ratio of $\RecursivePartition$,
    %we need to lower bound
    %\begin{align}
    %    f(\eps) = \frac{1 - 2 \kpa \sqrt{2\eps} + \frac{2k \kpa^2 }{k-1} \eps}
    %    {1 - \eps}
    %\end{align}
    %for $0\leq \eps < 1$.
    %We compute the derivative of $f(x)$:
    %\begin{align}
    %    \frac{\mathrm{d} f(\eps)}{\mathrm{d} \eps} = 
    %    & \frac{- \sqrt{2}\kpa \eps^{-1/2} + \frac{2k \kpa^2 }{k-1}}{1 - \eps}
    %    +
    %    \frac{1 - 2 \kpa \sqrt{2\eps} + \frac{2k \kpa^2 }{k-1} \eps}
    %    {(1 - \eps)^2} \notag \\
    %    =
    %    & \frac{- \sqrt{2}\kpa \eps^{-1/2} + \frac{2k \kpa^2 }{k-1}
    %        + \kpa \sqrt{2\eps} - \frac{2k \kpa^2 }{k-1} \eps +
    %1 - 2 \kpa \sqrt{2\eps} + \frac{2k \kpa^2 }{k-1} \eps}
    %{(1 - \eps)^2} \notag \\
    %= & \frac{- \sqrt{2}\kpa \eps^{-1/2} + \frac{2k \kpa^2 }{k-1}
    %    + 1 - \kpa \sqrt{2\eps}}
    %{(1 - \eps)^2}.
    %\end{align}
    %Therefore, the derivative of $f(\eps)$ gets zero
    %at the squares of roots of
    %\begin{align}\label{eq:x^2}
    %    -\sqrt{2} \kpa x^2 + \kh{\frac{2k \kpa^2 }{k-1} + 1}x -\sqrt{2} \kpa = 0.
    %\end{align}
    %Then it is not hard to see that $f(\eps)$ is minimised at the square of
    %the smallest root of~(\ref{eq:x^2}), i.e., $\eps = x_1^2$ where
    %\begin{align}
    %    x_1 = \frac{
    %        \frac{2k\kpa^2}{k-1} + 1
    %    - \sqrt{ \frac{4k^2}{(k-1)^2} \kpa^4 + \kh{\frac{4k}{k-1} - 8}\kpa^2 + 1} }
    %    {2\sqrt{2}\kpa}.
    %\end{align}
    %When $k = 2,3,4,5,10,100$,
    %we can numerically evaluate $x_1$ and get our approximation ratio lower bounds
    %as stated in the Corollary.
    We define the parameter \[
    \eps' = \frac{(1 - \frac{1}{k})^2}{64\kpa^2},\]
     which implies that $1 - 8\kpa \sqrt{\eps'} = 1/k$.
    Since Algorithm~\ref{algo1} always chooses the best between 
    the assignment found by recursively applications of Theorem~\ref{thm:cheeger_k} and a random assignment, the algorithm's approximation ratio is at least 
    \begin{align*}
        & \frac{\max\setof{1 - 8\kpa \sqrt{\eps}, 1/k}}{1 - \eps} %\notag \\
        \geq \frac{1/k}{1 - \eps'} %\notag \\
        \geq \frac{1/k}{1 - \frac{1}{256\kpa^2}} %\notag \\
        \geq \frac{1/k}{1 - \frac{1}{512(1 + \delta)^2 k^2}} %\notag \\
        \geq \frac{1}{k} + \frac{1}{(1 + \delta)^2 k^3} %\notag \\
        = \frac{1}{k} + \Omega\kh{\frac{1}{k^3}},
    \end{align*}
  where the first inequality follows by the fact that
    $\frac{1 - 8\kpa\sqrt{\eps}}{1-\eps}$ is a monotone decreasing function in $\varepsilon$,
    the third inequality follows from the definition of $\kpa$,
    and the last inequality follows from that $\delta$ is a constant. \end{proof}

\section{Algorithm for \maxlin\ on expanders \label{sec:expander}}

In this section we further develop techniques for analysing Hermitian Laplacian matrices by presenting a
subquadratic-time approximation algorithm for the \maxlin\ problem on expander graphs.
Our proof technique is inspired by Kolla's algorithm~\cite{kolla11}.
However, in contrast to the algorithm in \cite{kolla11}, 
 we use the Hermitian Laplacian to represent a \maxlin\ instance and show that,
when the underlying graph
has good expansion,
a good approximate solution is encoded in the eigenvector associated with  $\lambda_1(\calL_\ists)$. 
We assume that $G$ is a $d$-regular graph, and hence $\ists = (G,k)$ is a  \maxlin\ instance
with $n$ variables and $nd/2$ equations whose optimum
is $1 - \eps$. One can view $\ists$  as an instance generated by modifying $\eps$ fraction of the constraints~(i.e., edges) from a completely satisfiable instance $\IH = (\GH,k)$. Hence, 
a satisfiable assignment $\psi: V\to [k]$ for $\IH$
will satisfy at least a $(1 - \eps)$-fraction of equations in $\ists$.

Now we discuss the techniques used to  prove Theorem~\ref{thm:res2}.
Let $y_{\psi} \in \mathbb{C}^n$ be the normalised ``indicator vector'' of $\psi$, i.e.,
$\kh{y_{\psi}}_u = \frac{1}{\sqrt{n}} \omega_k^{\psi(u)}$.
Then it holds that 
\begin{align}
    \kh{y_\psi}^* \calL_{\IH} y_{\psi} =
    \frac{1}{d}
    \sum_{u\leadsto v} \weit_{uv} \norm{(y_{\psi})_u - \omega_k^{\cst_{uv}} (y_{\psi})_v}^2
    = 0.\notag
\end{align}
This implies that $y_{\psi}$ is an eigenvector associated with
$\lambda_1\left(\calL_{\IH} \right)=0$.
We denote by $\calU$ the underlying undirected graph of $G$,
and denote by $\calL_{\calU}$ the normalised Laplacian of 
 $\calG$.
Note that since $\calU$ is undirected, $\calL_{\calU}$
only contains real-valued entries. 
We first show that  the eigenvalues of 
$\calL_{\IH}$, the normalised Laplacian of the completely satisfiable instance, and of $\calL_{\calU}$, the normalised Laplacian of the underlining undirected graph $\calU$, coincide. 
Since $\calL_{\calU}$ is the Laplacian matrix of an expander graph, this implies that there is a gap between 
$\lambda_1\left( \calL_{\IH}\right)$ and $\lambda_2\left( \calL_{\IH}\right)$. 

\begin{lemma}\label{lem:lambdas}
It holds for all $1\leq i\leq n$ that 
    $\lambda_i\left(\calL_{\IH}\right) = \lambda_i\left(\calL_{\calU}\right)$.
\end{lemma}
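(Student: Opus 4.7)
The plan is to exhibit a unitary similarity transformation that sends $\calL_{\IH}$ to $\calL_{\calU}$. Since unitarily equivalent matrices share the same spectrum, this will immediately yield the claim.

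First, since $\IH$ is a completely satisfiable instance, there exists an assignment $\psi : V \to [k]$ such that $\psi(u) - \psi(v) \equiv \cst_{uv} \hmod k$ for every directed edge $u \leadsto v$ in $\GH$. Using $\psi$, I would define a diagonal unitary matrix $U \in \CCC^{n \times n}$ by $U_{uu} = \omega_k^{\psi(u)}$. The key computation is to evaluate the off-diagonal entries of $U^* A_{\IH} U$. For $u \leadsto v$, one has
\[
    (U^* A_{\IH} U)_{uv} = \conj{\omega_k^{\psi(u)}}\, \weit_{uv}\omega_k^{\cst_{uv}}\, \omega_k^{\psi(v)} = \weit_{uv}\, \omega_k^{-\psi(u) + \cst_{uv} + \psi(v)} = \weit_{uv},
\]
where the last equality uses $\psi(u) - \psi(v) \equiv \cst_{uv} \hmod k$. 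An analogous computation for the $v \leadsto u$ case shows $(U^* A_{\IH} U)_{uv} = \weit_{vu}$. Hence $U^* A_{\IH} U = A_{\calU}$, the (real-valued) weighted adjacency matrix of the underlying undirected graph $\calU$.

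Next, observing that $D_{\IH}$ is a real diagonal matrix with the same degrees as $D_{\calU}$, and that $U$ is diagonal, we have $U^* D_{\IH} U = D_{\IH} = D_{\calU}$. Combining with the previous step gives $U^* L_{\IH} U = L_{\calU}$, and since $D_{\IH}^{-1/2}$ commutes with $U$ for the same reason, we conclude
\[
    U^* \calL_{\IH} U = \calL_{\calU}.
\]
As $U$ is unitary, $\calL_{\IH}$ and $\calL_{\calU}$ have identical spectra, so $\lambda_i(\calL_{\IH}) = \lambda_i(\calL_{\calU})$ for every $1 \leq i \leq n$.

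The only subtle point is verifying that the diagonal unitary $U$ defined from $\psi$ conjugates $A_{\IH}$ to a real symmetric matrix, which hinges entirely on the ``gauge-fixing'' identity $\psi(u) - \psi(v) \equiv \cst_{uv} \hmod k$ guaranteed by complete satisfiability; once that is in place, the rest is a direct entrywise verification. No obstacle beyond bookkeeping is expected.
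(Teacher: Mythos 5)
Your proposal is correct and is essentially the paper's own argument in different packaging: the paper multiplies each eigenvector $f_i$ of $\calL_{\calU}$ entrywise by $\omega_k^{\psi(u)}$ (i.e., by the satisfying-assignment indicator $y_{\psi}$) and verifies the eigenvalue equation directly, which is exactly conjugation by your diagonal unitary $U$. Your formulation as the similarity $U^* \calL_{\IH} U = \calL_{\calU}$ relies on the same gauge-fixing identity $\psi(u)-\psi(v)\equiv \cst_{uv} \hmod k$ and immediately gives the equality of spectra.
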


\begin{proof}
    For any unit-norm eigenvector $f_i \in \mathbb{R}^n$ corresponding to the eigenvalue $\lambda_i(\calL_{\calU})$,
    we construct another unit vector $g_i\in\mathbb{C}^n$ such that
     \[
    (g_i)_u = (y_{\psi})_u (f_i)_u = \frac{1}{\sqrt{n}}\ \omega_k^{\psi(u)} \cdot (f_i)_u.
    \]
     Then, it follows that
    \begin{align*}
        \kh{\calL_{\IH} g_i}_u &=
        \frac{1}{d} \kh{ \sum\nolimits_{u\leadsto v} \weit_{uv} \kh{(g_i)_u - \omega_k^{\cst_{uv}} (g_i)_v}
        + \sum\nolimits_{v\leadsto u} \weit_{vu} \kh{(g_i)_u - \conj{\omega_k}\,^{\cst_{uv}}(g_i)_v}} \notag \\
        &=
        \frac{(y_{\psi})_u}{d}
        \kh{ \sum\nolimits_{u\leadsto v} \weit_{uv} \kh{(f_i)_u - (f_i)_v}
        + \sum\nolimits_{v\leadsto u} \weit_{vu} \kh{(f_i)_u - (f_i)_v}} \notag \\
        &= (y_{\psi})_u\cdot  \kh{\calL_{\calG} f_i}_u \notag \\
        &= \lambda_i(\calL_{\calG}) (y_{\psi})_u (f_i)_u \notag \\
        &= \lambda_i(\calL_{\calG}) (g_i)_u,
    \end{align*}
    which implies that $\calL_{\IH} g_i = \lambda_i(\calL_{\calG}) g_i$.
    %Therefore, $\lambda_i(\calL_{\calG})$ is also an eigenvalue of $\calL_{\IH}$.
    Since by construction
    $g_i$ is orthogonal to $g_j$ for any $i\neq j$,
    the lemma follows.
\end{proof}

Next we bound the perturbation of the bottom eigenspace of $\calL_{\IH}$ when the latter is turned into $\calL_{\ists}$.  
In particular, 
Lemma~\ref{lem:6.2} below proves that this perturbation does not affect too much to the vectors that have norm spreads out uniformly over all their coordinates.

\begin{lemma}\label{lem:6.2}
    Let $f\in \mathbb{C}^n$ be
    a vector such that $\norm{f_u} = \frac{1}{\sqrt{n}}$
    for all $u\in V$.
    It holds that
    \begin{align}\label{eq:6.1}
        \norm{\kh{\calL_\ists - \calL_{\IH}}f} \leq 2\sqrt{\eps}.
    \end{align}
\end{lemma}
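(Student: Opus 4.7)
The plan is to exploit the fact that $\ists$ and $\IH$ share the same underlying $d$-regular graph and hence the same degree matrix, so the difference $\calL_\ists - \calL_{\IH}$ reduces to a multiple of $A_\ists - A_{\IH}$. Concretely, $\calL_\ists - \calL_{\IH} = -\tfrac{1}{d}(A_\ists - A_{\IH})$, and if we let $M \defeq A_\ists - A_{\IH}$, then the nonzero entries of $M$ are supported on the set of ``bad'' edges, i.e. those edges on which the two instances disagree on the colour. Each such entry has magnitude at most $|\omega_k^{c} - \omega_k^{c'}| \leq 2$, and since $\ists$ differs from the fully satisfiable instance $\IH$ in at most an $\eps$-fraction of its $nd/2$ equations, there are at most $\eps n d/2$ bad edges.

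Let $b_u$ denote the number of bad edges incident to vertex $u$. Two simple bounds are crucial: first, $b_u \leq d$ for every $u$ because $G$ is $d$-regular, and second, $\sum_{u} b_u \leq 2 \cdot \eps n d / 2 = \eps n d$ by double counting. With these in hand, the main estimate is a Cauchy--Schwarz calculation at every coordinate: using the uniform assumption $\|f_v\| = 1/\sqrt{n}$,
\begin{align*}
\left|(Mf)_u\right|^2
= \Bigl|\sum_{v} M_{uv} f_v\Bigr|^2
\leq b_u \cdot \sum_{v} |M_{uv}|^2 \|f_v\|^2
\leq b_u \cdot 4 b_u \cdot \frac{1}{n}
= \frac{4 b_u^2}{n}.
\end{align*}

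Summing over $u$, bounding one copy of $b_u$ by $d$ and using $\sum_u b_u \leq \eps n d$, I obtain
\begin{align*}
\|Mf\|^2 \leq \frac{4}{n} \sum_{u} b_u^2 \leq \frac{4d}{n}\sum_{u} b_u \leq 4 \eps d^2.
\end{align*}
Dividing by $d^2$ gives $\|(\calL_\ists - \calL_{\IH})f\|^2 \leq 4\eps$, which is the desired inequality.

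There is no serious obstacle here; the only care needed is in choosing the right Cauchy--Schwarz split. A naive application bounding $|M_{uv}|$ uniformly by $2$ and $\|f_v\|$ uniformly by $1/\sqrt{n}$ gives $|(Mf)_u| \leq 2 b_u/\sqrt{n}$, which suffices once one invokes $b_u \leq d$ to convert $\sum_u b_u^2$ into $d \sum_u b_u$. The key structural point is that the $d$-regularity cancels the diagonal of $\calL_\ists - \calL_{\IH}$ and simultaneously controls the per-vertex bad degree, while the near-satisfiability of $\ists$ controls the total bad degree.
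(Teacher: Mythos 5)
Your proof is correct and takes essentially the same route as the paper: both reduce $\calL_\ists - \calL_{\IH}$ to an error matrix supported on the modified edges with entries of magnitude at most $2/d$ (times the edge weight), and both obtain the bound by combining the per-vertex constraint (your $b_u \le d$, the paper's row sums of the error matrix being at most $1$) with the global constraint $\sum_u b_u \le \eps n d$ coming from the $\eps$-fraction of modified constraints. The only cosmetic differences are that you use Cauchy--Schwarz row by row together with $\sum_u b_u^2 \le d \sum_u b_u$, where the paper uses the triangle inequality and an extremal observation (the sum of squared row sums is maximised when $\eps n$ rows have sum $1$), and that the paper carries the weights $\weit_{uv}$ explicitly while you implicitly treat the instance as unweighted, which is consistent with the $nd/2$-equation setting of Section 6.
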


\begin{proof}
Let $R\in\RRR^{n\times n}$ be a matrix defined by
    \begin{align}
        R_{uv} =
        \begin{cases}
            \weit_{uv}/d & \mbox{if}\  \kh{\calL_\ists}_{uv} \neq \kh{\calL_{\IH}}_{uv}, \\
            0 & \mbox{otherwise}.
        \end{cases}\notag
    \end{align}
  Then, it holds that
    \begin{align}\label{eq:sqr}
        \norm{\kh{\calL_\ists - \calL_{\IH}}f} &=
        \sqrt{\sum_{u\in V}
        \norm{\sum_{v\in V} \kh{\calL_\ists - \calL_{\IH}}_{uv} f_v}^2} \notag \\
        &\leq
        \sqrt{\sum_{u\in V}
        \kh{\sum_{v\in V} \norm{\kh{\calL_\ists - \calL_{\IH}}_{uv} f_v}}^2} \notag \\
        &\leq
        \sqrt{\sum_{u\in V}
        \kh{\sum_{v\in V} 2R_{uv}\norm{f_v}}^2} \notag \\
        & =\frac{2}{\sqrt{n}} \sqrt{\sum_{u\in V}
        \kh{\sum_{v\in V} R_{uv}}^2}.
    \end{align}
    Since $\ists$  can be viewed as   modifying  an $\eps$-fraction of the edges from $\IH$, the sum of the entires 
     of $R$ is at most $\eps n d / d = \eps n$,
    and the sum of each row of $R$ is at most $1$.
    Since (\ref{eq:sqr})
    is maximised when there are $\eps n$ rows of $R$ whose sum is $1$,
     we obtain (\ref{eq:6.1}).
\end{proof}

Based on Lemma~\ref{lem:6.2}, we prove that the  change from  $\calL_{\IH}$ to $\calL_\ists$  doesn't have too much influence on the eigenvector associated with  $\lambda_1(\calL_\ists)$. 
For simplicity, let $\lambda_2 = \lambda_2(\calL_{\IH}) = \lambda_2(\calL_{\calG})$.

\begin{lemma}\label{lem:dsk}
    Let $f_1\in \mathbb{C}^n$
    be a unit eigenvector associated with $\lambda_1(\calL_\ists)$.
    Then we have
    \begin{align} \notag
        \norm{\kh{\calL_\ists - \calL_{\IH}}f_1} \leq 20\sqrt{\frac{\eps}{\lambda_2}}.
    \end{align}
\end{lemma}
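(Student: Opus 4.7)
The plan is to decompose $f_1$ along the bottom eigenvector of $\calL_{\IH}$, apply Lemma~\ref{lem:6.2} to the uniform-norm component, and control the orthogonal component via the spectral gap $\lambda_2$. As a preliminary step, I would first establish $\lambda_1(\calL_\ists) \leq 2\eps$ by evaluating the Rayleigh quotient at the test vector $y_\psi$: each of the $\eps nd/2$ equations unsatisfied by $\psi$ in $\ists$ contributes at most $4/n$ to $y_\psi^* L_\ists y_\psi$, giving $y_\psi^* L_\ists y_\psi \leq 2\eps d$ while $y_\psi^* D_\ists y_\psi = d$.

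Next I would write $f_1 = \alpha y_\psi + h$ with $h \perp y_\psi$; by Lemma~\ref{lem:lambdas} and the connectedness of $\calG$, $y_\psi$ is (up to phase) the unique unit bottom eigenvector of $\calL_{\IH}$. Since $y_\psi$ has uniform coordinate norm $1/\sqrt{n}$, Lemma~\ref{lem:6.2} gives $\|(\calL_\ists - \calL_{\IH}) y_\psi\| \leq 2\sqrt{\eps}$, hence by the triangle inequality
\[
    \|(\calL_\ists - \calL_{\IH}) f_1\| \leq 2\sqrt{\eps} + \|(\calL_\ists - \calL_{\IH}) h\|.
\]
For the orthogonal component, the spectral gap implies $\lambda_2 \|h\|^2 \leq h^* \calL_{\IH} h$, and since $\calL_{\IH} y_\psi = 0$ the cross terms vanish so that $h^* \calL_{\IH} h = f_1^* \calL_{\IH} f_1 = \lambda_1(\calL_\ists) + f_1^*(\calL_{\IH} - \calL_\ists) f_1 \leq 2\eps + \|(\calL_\ists - \calL_{\IH}) f_1\|$. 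Writing $\delta := \|(\calL_\ists - \calL_{\IH}) f_1\|$ and combining with a bound on $\|(\calL_\ists - \calL_{\IH}) h\|$ in terms of $\|h\|$, this yields a self-referential inequality in $\delta$ that, when solved, should give the claimed $\delta \leq 20\sqrt{\eps/\lambda_2}$.

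The main obstacle is closing this self-referential loop at the right rate. The trivial operator-norm estimate $\|(\calL_\ists - \calL_{\IH}) h\| \leq \|\calL_\ists - \calL_{\IH}\|_{\mathrm{op}} \|h\| \leq 2\|h\|$ is nearly sharp in the worst case and, upon substitution, yields only $\delta = O(1/\lambda_2 + \sqrt{\eps})$, which matches the target $O(\sqrt{\eps/\lambda_2})$ solely in the regime $\eps = \Omega(1/\lambda_2)$ and leaves the small-$\eps$ regime uncovered. To handle the remaining regime I would refine the estimate on $\|(\calL_\ists - \calL_{\IH}) h\|$ by exploiting that $\calL_\ists - \calL_{\IH}$ is supported on only the $\eps nd$ edges where $\ists$ and $\IH$ differ; combined with the orthogonality $h \perp y_\psi$ (which forces $h$ to be approximately delocalised, since $y_\psi$ itself is delocalised and $f_1$ must be dominated by the $y_\psi$ component), one can hope for an analogue of Lemma~\ref{lem:6.2} giving $\|(\calL_\ists - \calL_{\IH}) h\| \lesssim \sqrt{\eps}\,\|h\|$. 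With such a refinement the self-referential inequality closes cleanly, absorbing the $\|h\|$ term into the $\sqrt{\eps}$ scale and yielding the $\sqrt{\eps/\lambda_2}$ rate claimed in the lemma.
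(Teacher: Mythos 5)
Your overall strategy---split $f_1$ using the bottom eigenspace of $\calL_{\IH}$, apply Lemma~\ref{lem:6.2} to the delocalised part, and use the spectral gap to control the rest---runs into a genuine gap exactly where you flag it, and the fix you propose does not work. The refined estimate $\norm{(\calL_\ists-\calL_{\IH})h}\lesssim \sqrt{\eps}\,\norm{h}$ for the component $h\perp y_\psi$ is not justified: orthogonality to a single delocalised vector does not make $h$ delocalised (an arbitrary vector orthogonal to $y_\psi$ can be concentrated entirely on the $O(\eps n)$ vertices touched by the modified edges, in which case $\norm{(\calL_\ists-\calL_{\IH})h}$ is of order $\norm{h}$, not $\sqrt{\eps}\,\norm{h}$). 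Moreover, your heuristic justification---that $f_1$ ``must be dominated by the $y_\psi$ component''---is circular: that statement is essentially Lemma~\ref{lem:span}, which is proved \emph{from} the present lemma, not before it. Without the refinement, your self-referential inequality $\delta\leq 2\sqrt{\eps}+C\sqrt{(2\eps+\delta)/\lambda_2}$ only yields $\delta=O(\sqrt{\eps/\lambda_2}+1/\lambda_2)$, and since $\lambda_2\leq 2$ the extra $1/\lambda_2$ term is a constant, so the bound $20\sqrt{\eps/\lambda_2}$ is not recovered in the small-$\eps$ regime, which is the regime of interest.

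The paper avoids this loop by choosing a different reference vector: not $y_\psi$, but the vector $g$ obtained from $f_1$ itself by normalising each coordinate to magnitude $1/\sqrt{n}$ while keeping its phase (note Lemma~\ref{lem:6.2} only requires uniform coordinate norms, so it applies to $g$). The distance $\norm{f_1-g}$ equals $\norm{h-\vec{1}}$ where $h_u=\norm{(f_1)_u}$, and the pointwise inequality $\bigl|\norm{(f_1)_u}-\norm{(f_1)_v}\bigr|\leq\norm{(f_1)_u-\omega_k^{\cst_{uv}}(f_1)_v}$ gives $h^\rot\calL_{\calG}h\leq f_1^*\calL_\ists f_1=\lambda_1(\calL_\ists)\leq 2\eps$; the real spectral gap of the underlying undirected graph then bounds $\norm{f_1-g}\leq 2\sqrt{\eps/\lambda_2}$ with no dependence on the quantity being estimated. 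After that, the crude operator-norm bound on the residual $f_1-g$ is perfectly adequate, and the triangle inequality gives the stated $20\sqrt{\eps/\lambda_2}$. If you want to salvage your argument, replace the decomposition against $y_\psi$ by this magnitude-versus-phase decomposition; as written, the proof is incomplete.
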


\begin{proof}
    We first show that $f_1$ is close to a unit vector
    whose coordinates are all of the same norm $\frac{1}{\sqrt{n}}$.
    Let $g\in\mathbb{C}^n$ and $h\in\mathbb{R}^n$ be defined by
    \begin{align} \notag
        g_u = \frac{(f_1)_u}{\sqrt{n}\norm{(f_1)_u}}\quad\text{and}\quad
        h_u = \norm{(f_1)_u}.
    \end{align}
    Then we have
    \begin{align}\label{eq:6.6}
        h^\rot \calL_{\calG} h =
        \frac{1}{d} \sum_{u\leadsto v} \kh{\norm{(f_1)_u} - \norm{(f_1)_v}}^2 \leq
        \frac{1}{d} \sum_{u\leadsto v} \norm{(f_1)_u - \omega_k^{\cst_{uv}} (f_1)_v}^2 =
        f_1^* \calL_\ists f_1 \leq 2\eps,
    \end{align}
    where the last inequality follows from the easy direction of our Cheeger inequality~(Theorem~\ref{thm:cheeger_k}). We introduce parameters $a,b$ such that 
    \begin{align}
        h = a \vec{1} %\frac{\vec{1}}{\sqrt{n}}
        + b \vec{1}_\bot, \notag
    \end{align}
    where $\vec{1}$ is the normalised all-ones (i.e., with all $\frac{1}{\sqrt{n}}$ entries) vector and $\vec{1}_\bot$ is a unit vector orthogonal to $\vec{1}$.      Since $\vec{1}$ is the eigenvector associated with $\lambda_1(\calL_{\calG}) = 0$,
    it holds that
    \begin{align}
        h^{\rot} \calL_{\calG} h = b^2\kh{\kh{\vec{1}_\bot}^\rot \calL_{\calG} \vec{1}_\bot}  \geq b^2\lambda_2 , \notag
    \end{align}
    which coupled with (\ref{eq:6.6}) gives us that  $b \leq \sqrt{\frac{2\eps}{\lambda_2}}$.
    Hence,  we can upper bound the distance between $h$ and $\vec{1}$ by     \begin{align} 
        \norm{h - \vec{1}} =
        \sqrt{(1 - a)^2 + b^2} \leq
        \sqrt{1 - a^2 + b^2}
        = \sqrt{2}b \leq
        2\sqrt{\frac{\eps}{\lambda_2}} \notag
    \end{align}
    where the first inequality holds since
    $h$ is a unit vector and thus $a \in [0,1]$.
    This gives us that  
    \begin{align*}
        \norm{f_1 - g} = \sqrt{\sum_{u\in V} \norm{(f_1)_u - \frac{(f_1)_u}{\sqrt{n}\norm{(f_1)_u}}}^2}
        = \sqrt{\sum_{u\in V} \kh{\norm{(f_1)_u} - \frac{1}{\sqrt{n}}}^2}
        %= \sum_{u\in V} \kh{h_u - 1}^2
        = \norm{h - \vec{1}}\leq 2\sqrt{\frac{\eps}{\lambda_2}}.
    \end{align*}
    %Using Lemma~\ref{lem:6.2},
    We can use this to derive the upper bound in this lemma by
    \begin{align}
        \norm{\kh{\calL_\ists - \calL_{\IH}}f_1} &\leq
        \norm{\kh{\calL_\ists - \calL_{\IH}} g} +
        \norm{\kh{\calL_\ists - \calL_{\IH}} \kh{f_1 - g}} \notag \\
        &\leq 2\sqrt{\eps} + \norm{\calL_\ists \kh{f_1 - g}} + \norm{\calL_{\IH} \kh{f_1 - g}} \notag\\
        &\leq 2\sqrt{\eps} + 4\norm{f_1 - g} \notag\\
        &\leq 20 \sqrt{\frac{\eps}{\lambda_2}}, \notag
    \end{align}
    where the second inequality follows from Lemma~\ref{lem:6.2},
    and the third inequality follows from  the fact  that the eigenvalues
    of  $\calL_\ists$ and $\calL_{\IH}$ are at most $2$.
    \end{proof}

We then prove the following lemma which shows that
the eigenvector $f_1$ corresponding to $\lambda_1(\calL_\ists)$ is close to $y_{\psi}$,
the indicator vector of the optimal assignment $\psi$.

\begin{lemma}\label{lem:span}
    Let $f_1\in \mathbb{C}^n$
    be a unit eigenvector associated with $\lambda_1(\calL_\ists)$. Then, there exist $\alpha,\beta \in \mathbb{C}$ and a unit vector $y_\bot\in\mathbb{C}^n$
    orthogonal to $y_{\psi}$ (i.e. $\kh{y_\bot}^* y_{\psi} = 0$) such that
$ f_1 = \alpha y_{\psi} + \beta y_\bot $
    and
    $
        \norm{\beta} \leq 30 \sqrt{\eps/ \lambda_2^3}$.
\end{lemma}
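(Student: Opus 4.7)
The plan is to decompose $f_1$ in the eigenbasis of $\calL_{\IH}$ and exploit the fact that, by Lemma~\ref{lem:lambdas} together with the expansion of $\calG$, the matrix $\calL_{\IH}$ has a spectral gap: its kernel is spanned by $y_{\psi}$ and every vector orthogonal to $y_{\psi}$ is ``stretched'' by $\calL_{\IH}$ by a factor of at least $\lambda_2$. Concretely, write $f_1 = \alpha y_{\psi} + \beta y_{\bot}$, where $y_{\bot}$ is a unit vector with $y_{\bot}^* y_{\psi} = 0$. Existence of such a decomposition and of $\alpha,\beta \in \mathbb{C}$ with $|\alpha|^2 + |\beta|^2 = 1$ is immediate. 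The task is then to upper bound $|\beta|$.

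First, I would rewrite the eigenvector equation $\calL_\ists f_1 = \lambda_1(\calL_\ists) f_1$ as
\[
\calL_{\IH} f_1 \;=\; \lambda_1(\calL_\ists)\, f_1 \;-\; \bigl(\calL_\ists - \calL_{\IH}\bigr) f_1,
\]
and then apply the projector $P_{\bot}$ onto the orthogonal complement of $y_{\psi}$ to both sides. On the left, $\calL_{\IH} f_1 = \beta\, \calL_{\IH} y_{\bot}$ (since $\calL_{\IH} y_{\psi} = 0$), and because $\calL_{\IH}$ leaves $y_{\psi}^{\perp}$ invariant, we have $P_{\bot} \calL_{\IH} f_1 = \beta\, \calL_{\IH} y_{\bot}$. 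Expanding $y_{\bot}$ in the eigenbasis of $\calL_{\IH}$ restricted to $y_{\psi}^{\perp}$ (whose eigenvalues are at least $\lambda_2$ by Lemma~\ref{lem:lambdas}) gives $\|\calL_{\IH} y_{\bot}\| \geq \lambda_2$, and hence the left-hand side has norm at least $\lambda_2 |\beta|$.

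On the right, $P_{\bot} f_1 = \beta y_{\bot}$, so $\|P_{\bot}(\lambda_1(\calL_\ists) f_1)\| = \lambda_1(\calL_\ists)\, |\beta|$, while $\|P_{\bot} (\calL_\ists - \calL_{\IH}) f_1\| \leq \|(\calL_\ists - \calL_{\IH}) f_1\| \leq 20\sqrt{\eps/\lambda_2}$ by Lemma~\ref{lem:dsk}. Combining both sides,
\[
\lambda_2 |\beta| \;\leq\; \lambda_1(\calL_\ists)\, |\beta| \;+\; 20\sqrt{\eps/\lambda_2},
\qquad\text{i.e.,}\qquad
|\beta| \;\leq\; \frac{20\sqrt{\eps/\lambda_2}}{\lambda_2 - \lambda_1(\calL_\ists)}.
\]

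To close the argument I would invoke the easy direction of Theorem~\ref{thm:cheeger_k}, which gives $\lambda_1(\calL_\ists) \leq 2\pnt(\ists) \leq 2\eps$, and assume we are in the nontrivial regime $\eps \leq \lambda_2/6$ (otherwise the claimed bound is vacuous because $30\sqrt{\eps/\lambda_2^3} \geq 1 \geq |\beta|$). In that regime $\lambda_2 - \lambda_1(\calL_\ists) \geq 2\lambda_2/3$, which yields $|\beta| \leq 30\sqrt{\eps/\lambda_2^3}$. The main subtlety is the invariance step: one must carefully justify that $P_{\bot}\calL_{\IH} = \calL_{\IH} P_{\bot}$ on $y_{\psi}^{\perp}$ so that the projection on the left side of the identity produces exactly $\beta \calL_{\IH} y_{\bot}$, and one must handle the boundary regime where $\lambda_1(\calL_\ists)$ is comparable to $\lambda_2$ to keep track of absolute constants.
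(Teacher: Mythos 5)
Your proof is correct and is essentially the paper's own Davis--Kahan-style argument: both rely on the same decomposition along the eigenbasis of $\calL_{\IH}$, the spectral gap from Lemma~\ref{lem:lambdas}, the perturbation bound of Lemma~\ref{lem:dsk}, and $\lambda_1(\calL_\ists)\le 2\eps$ from the easy direction of Theorem~\ref{thm:cheeger_k}. The only difference is bookkeeping: the paper bounds $\norm{\calL_{\IH}f_1}$ directly (so $\lambda_1$ is absorbed into the constant $30$ with no case split), whereas you project the eigenvector equation onto $y_{\psi}^{\perp}$ and divide by the gap $\lambda_2-\lambda_1(\calL_\ists)$, handling the regime $\eps>\lambda_2/6$ by vacuousness --- both yield the stated constant.
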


\begin{proof}
    The proof essentially corresponds to the Davis-Kahan theorem~\cite{DavisKahan} for 1-dimensional eigenspaces. Let $y_{\psi} = v_1,\ldots,v_n\in\mathbb{C}^n$ be the orthonormal eigenvectors
    associated with eigenvalues $0 = \lambda_1\left(\calL_{\IH}\right) \le \cdots \le \lambda_n\left(\calL_{\IH}\right)$ of $\calL_{\IH}$, 
    which means $\calL_{\IH}$ can be diagonalised by
    $\calL_{\IH} = \sum_{i=2}^n \lambda_i\left(\calL_{\IH}\right) v_1 v_1^*$.
   Then it holds that 
    \begin{align}\label{eq:6.14}
        \norm{\calL_{\IH} f_1}^2 =
        f_1^{\rot} \calL_{\IH}^2 f_1
        %= \sum_{i=1}^n \lambda_i^2 \inner{f_1, v_i}^2
        = \sum_{i=2}^n \lambda_i^2\left(\calL_{\IH}\right) \norm{f_1^* v_i}^2
        \geq \lambda_2^2\left(\calL_{\IH}\right) \kh{1 - \norm{f_1^* y_{\psi}}^2}
        = \lambda_2^2 \norm{\beta}^2.
    \end{align}
    By Lemma~\ref{lem:dsk},
    the square root of this quantity can be upper bounded by
    \begin{align}\label{eq:6.15}
        \norm{\calL_{\IH} f_1}
        \leq
        \norm{\calL_{\ists} f_1} +
        \norm{\kh{\calL_{\ists} - \calL_{\IH}} f_1}
        \leq \lambda_1\kh{\calL_{\ists}} + 20 \sqrt{\frac{\eps}{\lambda_2}}
        \leq 30\sqrt{\frac{\eps}{\lambda_2}},
    \end{align}
    where the last inequality follows by noting
    $\lambda_1(\calL_{\ists}) \leq 2\eps$ by the easy direction
    of our Cheeger inequality and $\lambda_2\leq 2$.
    Combining~(\ref{eq:6.14}) and~(\ref{eq:6.15}) proves the statement.
\end{proof}

%Based on some technical lemmas, which are presented and analysed in the appendix, we show in Lemma~\ref{lem:span} that   the eigenvector  $f_1$ corresponding to $\lambda_1(\calL_\ists)$, is close to $y_{\psi}$, the indicator vector of the optimal assignment $\psi$.  
%The result is essentially the Davis-Kahan theorem~\cite{DavisKahan} for 1-dimensional eigenspaces.

Based on  Lemma~\ref{lem:span}, 
$f_1$ is close to the indicator vector of an optimal assignment rotated by some angle.
In particular, we have that 
\begin{align} \label{eq:abc}
    \norm{f_1 - \frac{\alpha}{\norm{\alpha}} y_{\psi}} =
    \sqrt{(1 - \norm{\alpha})^2 + \norm{\beta}^2} \leq
    \sqrt{1 - \norm{\alpha}^2 + \norm{\beta}^2}
    = \sqrt{2}\norm{\beta} \leq
    30\sqrt{\frac{2\eps}{\lambda_2^3}},
\end{align}
where $\frac{\alpha}{\norm{\alpha}} y_{\psi}$ is the vector that encodes the information of
an assignment that satisfies
all the  equations in $\IH$
and at least $1 - \eps$ fraction
of equations in $\ists$. % rotated by the angle of $\alpha$.
Therefore, our goal is to recover $\frac{\alpha}{\norm{\alpha}} y_{\psi}$
from $f_1$. 
%To this end, we perform  an exhaustive search in the unit circle
%by enumerating an 
%angles $\eta\in [0, 2\pi/k)$ of the form
%\[
%\eta = \frac{t\sqrt{\eps}}{\sqrt{n}}, \qquad t=0,1,\cdots, \left\lceil \frac{2\pi\sqrt{n}}{ k\sqrt{\eps}} \right\rceil,
%\]
%and identifying the assignment to a vertex $u$ according to
%the root of unity that is the closest
%to the coordinate $(f_1)_u$ rotated by $\eta$.   We prove that such ``optimal'' vector encodes an assignment that satisfies (\ref{eq:thm2eq}), and it is easy to see that this enumeration can be done in $O\left( \frac{n^{3/2}}{k\sqrt{\eps}} \right)$ time, which yields Theorem \ref{thm:res2}.

\begin{proof}[Proof of Theorem~\ref{thm:res2}]
    Let $\psi$ be the optimal assignment of $\ists$
    satisfying $1 - \eps$ fraction of equations,
    which is also a completely satisfying assignment of
    $\IH$.
    Let $f_1$ be a unit eigenvector associated with $\lambda_1(\calL_{\ists})$.
    By Lemma~\ref{lem:span},
    there exists $\alpha,\beta\in \mathbb{C}$ such that
    $f_1 = \alpha y_{\psi} + \beta y_{\bot}$
    where $\norm{\beta} \leq 30 \sqrt{\eps /\lambda_2^3}$.
    Our goal is to find a
    vector $z_\asn \in \mathbb{C}^n$,
    which equals the indicator vector of $\asn$ ratoted by some angle
    and satisfies
    %assignment $\asn : V\to [k]$ satisfying
    \begin{align}\label{eq:sat}
        \norm{f_1 - z_\asn } \leq
        \norm{f_1  - \frac{\alpha}{\| \alpha\|} y_{\psi}}  \leq
        30\sqrt{\frac{2\eps}{\lambda_2^3}},
    \end{align}
    where the last inequality follows by (\ref{eq:abc}).
    The assignment $\asn$ corresponding to
    such a $z_\asn$ will give us
    that
    the fraction of unsatisfied equations by
    $\asn$ is     
    \begin{align}
        \pnt^{\asn}(\ists) &\leq 10 k^2 z_{\asn}^* \calL_{\ists} z_{\asn} \notag \\
        &= 10k^2 (z_{\asn} - f_1 + f_1)^* \calL_{\ists} (z_{\asn} - f_1 + f_1) \notag\\
        &\leq k^2 \kh{ (z_\asn - f_1)^* \calL_\ists (z_\asn - f_1)
        + f_1^* \calL_\ists f_1
    + 2 \norm{ (z_\asn - f_1)^* \calL_\ists f_1} }
        \notag\\
        &\leq 10k^2 \kh{ 2 \norm{z_\asn - f_1}^2 + \lambda_1(\calL_\ists) +
        2 \norm{z_\asn - f_1} \sqrt{\lambda_1{(\calL_\ists)}} } \notag \\
        & \leq 10k^2 \left( 2\cdot 900\cdot \frac{2\varepsilon}{\lambda_2^3} + 2\varepsilon + 2\cdot 30\cdot \sqrt{\frac{2\varepsilon}{\lambda_2^3} }\cdot
       \sqrt{2\varepsilon} \right) \notag\\
        &\leq 100000 k^2 \cdot \frac{\eps}{\lambda_2^3}, \notag
    \end{align}
    where the factor $10 k^2$ above follows from the fact that
    $\norm{1 -\omega_k^j}^2$ is at least $1/(10k^2)$
    for $j = 1,\ldots,k-1$.

    To find such vector $z_{\phi}$ satisfying (\ref{eq:sat}), we define  $\asn_\eta: V \rightarrow [k]$ by
    \begin{align}\notag
        \asn_\eta(u) = \mathrm{arg\,min}_{j\in[k]} \norm{(f_1)_u - \mathrm{e}^{\eta i}\omega_k^j}.
    \end{align}
    Notice that, since $\frac{\alpha}{\norm{\alpha}}$ is equal to $\mathrm{e}^{\eta i}$ for some $\eta\in [0,2\pi)$, by defining
    $
        (z_{\asn_\eta})_u = \mathrm{e}^{\eta i} \omega_k^{\asn_\eta(u)}
    $ the solution to the following optimisation problem 
    \begin{align}\notag
        \min_{\eta\in [0,2\pi)}  \norm{z_{\asn_\eta} - f_1}
    \end{align}
    gives us a vector that satisfies (\ref{eq:sat}).
               To solve this optimisation problem,
    we notice that it suffices to consider $\eta$ in the range $[0, 2\pi / k)$.
    Therefore, we simply enumerate all $\eta$'s over the following discrete set:
    \begin{align}\notag
    \setof{\frac{t\sqrt{\eps}}{\sqrt{n}}\ \left|\ t = 0,1,\ldots,\ceil{\frac{2\pi \sqrt{n}}{k \sqrt{\eps}}}\right.}.
    \end{align}
    By enumerating this set, we can find an assignment $\asn$ and an $\eta$ such that
    \begin{align}\notag
        \norm{f_1 - z_{\asn_\eta} } \leq \norm{ f_1 - \frac{\alpha }{\| \alpha \|}y_{\psi} } + O(\sqrt{\eps}),
    \end{align}
    which is enough to get our desired approximation.
    Since the size of this set is $O\left(\frac{\sqrt{n}}{k\sqrt{\eps}}\right)$,
    the total running time is $O\left(\frac{n^{1.5}}{k\sqrt{\eps}}\right)$
    plus the running time needed to compute the eigenvector $f_1$.
    %\begin{align}
    %    \sizeof{ \pnt^{\asn}(\ists) - \pnt^{\asn^*}(\ists) }
    %    &= \sizeof{ y_\asn^T \calL_\ists y_\asn - y_{\asn^*}^T \calL_\ists y_{\asn^*} } \notag\\
    %    &= \sizeof{ \norm{y_\asn}_{\calL_\ists}^2 - \norm{x_{\asn^*}}_{\calL_\ists}^2} \notag\\
    %    &= \sizeof{ \norm{y_\asn}_{\calL_\ists} - \norm{y_{\asn^*}}_{\calL_\ists}}
    %    \kh{ \norm{y_\asn}_{\calL_\ists} + \norm{y_{\asn^*}}_{\calL_\ists}} \notag\\
    %    &\leq \norm{y_\asn - y_{\asn^*}}_{\calL_\ists}
    %    \kh{ \norm{y_\asn}_{\calL_\ists} + \norm{y_{\asn^*}}_{\calL_\ists}} \notag\\
    %    &\leq 4 \norm{y_\asn - y_{\asn^*}}\kh{ \norm{y_\asn} + \norm{y_{\asn^*}}} \notag\\
    %    &\leq 8 \kh{\norm{y_\asn - f_1} + \norm{f_1 - y_{\asn^*}}} \notag\\
    %    &\leq \frac{8\sqrt{2}}{\lambda_2(\calL_\jsts)}
    %\end{align}
\end{proof}

\section{Concluding remarks}

Our work leaves several open questions for further research: while the factor of $k$ in our Cheeger inequality (Theorem~\ref{thm:cheeger_k}) is needed, it would be interesting to see if it's possible to construct a different Laplacian for which a similar Cheeger inequality holds with a smaller dependency on $k$. 
%Several questions arise from the results obtained in this paper. First of all, is the dependency on $k$ in our Cheeger inequality (Theorem~\ref{thm:cheeger_k}) tight? We conjecture this is the case: this factor of $k$ seems originates from the relation between the quadratic forms of the Hermitian Laplacian and the penalty function $p$ defined in Definition~\ref{def:pnt}. Indeed, we could modify our penalty function in the following way: given an equation of the form  $x_u - x_v = c \mod k$, let $\phi$ be an assignment to the variables such that $\phi(u) -\phi(v) = d \neq c \mod k$. Currently, our function $p^{\phi}$ assigns a unit penalty to $\phi$ with respect to this equation. If instead we would assign a penalty that is proportional to $|c - d|$, we would obtain a Cheeger inequality in the same spirit of  Theorem~\ref{thm:cheeger_k} but without any dependency on $k$. However, we would end up solving a slightly different version of the original \maxlin\ problem.
For example, instead of embedding vertices in $\CCC$ and mapping assignments to roots of unity, one could consider embedding vertices in higher dimensions using the  bottom $k$ eigenvectors of the Laplacian of the label extended graph, and see  if a relation between the imperfectness ratio of Definition~\ref{def:perfect} and the $k$-th smallest eigenvalue of this Laplacian still holds.% We would like to know whether a  Cheeger inequality similar to Theorem~\ref{thm:cheeger_k} but with better dependency on $k$ can be 
%obtained.
%Another possibility to obtain a Cheeger inequality %that relates a Laplacian matrix to the minimum number of unsatisfied equations in \maxlin\ 
 %with a better dependency on $k$ is to study the spectral properties of the Label-Extended graph and look at the $k$-dimensional subspace spanned by the bottom $k$ eigenvectors of the corresponding Laplacian matrix: this, however, aside from computational considerations, would make the rounding algorithm more challenging.
%Another direction left open is to study whether a similar Cheeger inequality for Unique Games can be established. 

%We also raise the question if we can obtain a Cheeger-type inequality for general Unique Games. At the moment, however, spectral properties of the Laplacian matrix of Label-Extended graphs for general Unique Games are poorly understood.

Finally, we observe that several cut problems in directed graphs can be formulated as special cases of \maxlin\ (see, e.g., \cite{AnderssonEH01,GW04}). Because of this, we believe the Hermitian Laplacians studied in our paper will have further applications in the development of fast algorithms for combinatorial problems on directed graphs, and might have further connections to Unique Games.

\paragraph{Acknowledgement.} 
The project is partially supported by the ERC Starting Grant (DYNAMIC MARCH).
We are very grateful to Mihai Cucuringu for valuable discussions on Hermitian Laplacian matrices and their applications. We would also like to thank Chris Heunen for some fruitful conversations on topics closely related to this paper.

%\input{pnorm}

%\section{NP-hardness of Maximum 3-way Directed Cut}

%\input{hardness}\label{sec:hard}

%\input{inapproximability}\label{sec:inapprox}

%\input{normalized}

\appendix

\bibliographystyle{alpha}
\bibliography{references}

\end{document}